\newtheorem{Teorema}{\bf Theorem}
\newtheorem{Proposicion}{\bf Proposition}
\def\x{{\mathbf x}}
\def\y{{\mathbf y}}
\def\v{{\mathbf v}}
\def\q{{\mathbf q}}
\def\r{{\mathbf r}}
\def\z{{\mathbf z}}
\def\p{{\mathbf p}}
\def\n{{\mathbf n}}
\def\A{{\mathbf A}}
\def\C{{\mathbf C}}
\def\H{{\mathbf H}}
\def\R{{\mathbf R}}
\def\Q{{\mathbf Q}}
\def\V{{\mathbf V}}
\def\G{{\mathbf G}}
\def\X{{\mathbf X}}
\def\Z{{\mathbf Z}}
\def\W{{\mathbf W}}
\def\z{{\mathbf z}}
\def\P{{\mathbf P}}
\def\S{{\mathbf S}}
\def\ctA{{\mathrm{ct}_{/\mathbf{A}}}}
\def\Id{{\textbf{Id}}}
\newcommand{\Sigmab}{{\bm \Sigma}}
\newcommand{\Deltab}{{\bm \Delta}}
\newcommand{\Psib}{{\bm \Psi}}
\newcommand{\Phib}{{\bm \Phi}}
\newtheorem{Algoritmo}{\em Algorithm}
\newcommand{\mub}{\boldsymbol\mu}
\newcommand{\nub}{{\boldsymbol\nu}}
\newcommand{\meanPrior}{\mub}
\newcommand{\covPrior}{\Sigmab}
\newcommand{\meanFilt}{\mub} 
\newcommand{\covFilt}{\Sigmab}
\newcommand{\meanPred}{\mub}
\newcommand{\covPred}{\Sigmab}
\newcommand{\meanSmooth}{\mub^{\text{s}}}
\newcommand{\covSmooth}{\Sigmab^{\text{s}}}
\newcommand{\loss}{\mathcal{L}}
\newcommand{\K}{{\bf K}}
\def\gauss{{\mathcal N}}
\def\Ab{{\mathbf A}}
\def\Gb{{\mathbf G}}
\def\Qb{{\mathbf Q}}
 \newcommand{\Real}{\mathbb{R}}
 \newcommand{\Complex}{\mathbb{C}}
\newcommand{\cred}{\textcolor{black}}
\newcommand{\cblue}{\textcolor{black}}
\begin{document}

\title{Graphical Inference in Linear-Gaussian \\State-Space Models}

\author{V\'ictor~Elvira,~\IEEEmembership{Senior~Member,~IEEE,}
        and~\'Emilie Chouzenoux,~\IEEEmembership{Senior~Member,~IEEE}
\thanks{V. Elvira is with the School of Mathematics at the University of Edinburgh (UK) and The Alan Turing Institute (UK). \'E. Chouzenoux is with Universit\'e Paris-Saclay, Inria, CentraleSup\'elec, Centre de Vision Num\'erique (France).}
\thanks{V.E. and \'E.C. acknowledge support from the \emph{Agence Nationale de la Recherche} of France under PISCES (ANR-17-CE40-0031-01) and MAJIC (ANR-17-CE40-0004-01) projects. V.E. acknowledges support from the Leverhulme Research Fellowship (RF-2021-593). \'E.C. acknowledges support from the European
Research Council Starting Grant MAJORIS ERC-2019-STG-850925.}
}

\markboth{IEEE TRANSACTIONS ON SIGNAL PROCESSING}%
{ }

\maketitle

\begin{abstract}

State-space models (SSM) are central to describe time-varying complex systems in countless signal processing applications such as remote sensing, networks, biomedicine, and finance to name a few. Inference and prediction in SSMs are possible when the model parameters are known, which is rarely the case. The estimation of these parameters is crucial, not only for performing statistical analysis, but also for uncovering the underlying structure of complex phenomena. In this paper, we focus on the linear-Gaussian model, arguably the most celebrated SSM, and particularly in the challenging task of estimating the transition matrix that encodes the Markovian dependencies in the evolution of the multi-variate state. We introduce a novel perspective by relating this matrix to the adjacency matrix of a directed graph, also interpreted as the causal relationship among state dimensions in the Granger-causality sense. Under this perspective, we propose a new method called GraphEM based on the well sounded expectation-maximization (EM) methodology for inferring the transition matrix jointly with the smoothing/filtering of the observed data. We propose an advanced convex optimization solver relying on a consensus-based implementation of a proximal splitting strategy for solving the M-step. This approach enables an efficient and versatile processing of various sophisticated priors on the graph structure, such as parsimony constraints, while benefiting from convergence guarantees. We demonstrate the good performance and the interpretable results of GraphEM by means of two sets of numerical examples. 
  
\end{abstract}

\begin{IEEEkeywords}
State-space modeling, graphical inference, sparsity, proximal methods, primal-dual algorithms, Kalman filtering, EM algorithm.
\end{IEEEkeywords}

\section{Introduction}
\label{sec_intro}

State-space modeling is widely used to describe complex systems in applications of science and engineering \cite{hamilton1994state,kim1999state,Sarkka}. These discrete-time models are described by a hidden (or latent) state that evolves in a Markovian manner over time through arbitrarily complicated dynamics, which allows for a realistic modeling of complex phenomena. The observations are sequentially collected and linked to the hidden states. 

This modeling aims at mimicking complex dynamic systems in an accurate manner through a hidden latent process, which sometimes is of reduced dimension w.r.t. the multivariate time series. Alternatively, the state can be very high-dimensional and can be interpreted, e.g., each dimension of the state represents a physical magnitude in a set of 3D points, but observation cannot be acquired in all locations. This scenario is common in complex systems, which are usually composed of many simpler units. Interestingly, in those systems, each unit usually interacts with very few others \cite{watts1998collective}. For instance, the evolution of the atmosphere can be modeled with a hidden state that captures physical properties at millions of geographical locations, but from one time step to the next one, each location is only affected by few close locations \cite{bauer2015quiet,choi2021short}. Therefore, accurate and efficient inference requires realistic modeling (e.g., high-dimensional state) combined with the incorporation of prior knowledge of the inner structure of the system (e.g., sparsity in the way the dimensions of the state interact). 
This paper focuses on the relevant linear-Gaussian state-space model (LG-SSM). This model allows for exact inference, when the model parameters are known, through the Kalman filter and the Rauch-Tung-Striebel (RTS) smoother \cite[Chapter 8]{Sarkka}.
In nonlinear and/or non-Gaussian models, the inference is generally done via particle filtering \cite{djuric2003particle,doucet2009tutorial}, e.g., the BPF \cite{Gordon93}, APF \cite{pitt1999filtering}, IAPF \cite{elvira2018insearch}, and OAPF \cite{branchini2021optimized}  algorithms (see \cite{elvira2019elucidating} for a further discussion). In all these models, the parameter estimation is generally done via particle-based methods (see for instance \cite{andrieu2010particle}).

\noindent\textbf{Existing methods in the literature.} SSMs have shown to be powerful mathematical models for time series analysis. A few alternative methods to SSMs exist, e.g., classical  multivariate time series analysis  models \cite{reinsel1997elements} or polynomial data fitting for trajectory estimation in tracking applications \cite{li2016fitting,li2018joint} (see further discussion in \cite[Chapter 1]{Sarkka} or \cite{fasiolo2016comparison}). 
In the case of SSMs, and more particularly in the LG-SSM, the estimation of the model parameters is essential to tackle problems that otherwise would be unapproachable, allowing for the estimation of the mean and covariance of the hidden state through Kalman filtering and RTS smoothing. Existing methods for the estimation of model parameters in LG-SSM focus on the maximum-likelihood (ML) estimate. Two main classes of methods have been proposed in the literature \cite[Chap.~12]{Sarkka}. The first class of methods makes use of the so-called sensitivity equations \cite{Gupta1974}, or on the Fisher's identity \cite{Segal88,Segal89} (see the discussion in \cite[Sec.~10.2.4]{Cappe2005} for connections between both strategies), to evaluate efficiently the first and second derivatives of the likelihood function with respect to the unknown parameters. This allows to apply iterative optimizers, such as quasi-Newton \cite{Olsson} or Newton-Raphson \cite{Gupta1974}, to obtain the ML estimate. The second class of methods relies on the expectation-minimization (EM) algorithm \cite{shumway1982approach}\cite[Sec.~10.4]{Cappe2005}\cite[Sec.~12.2.3]{Sarkka}, where the maximization of the marginal likelihood is indirectly performed by iteratively maximizing (M-step) an expectation (E-step) of the log-likelihood. Applications of the EM strategy in the LG-SSM to various fields, e.g., finance, electrical engineering, and radar, can be found for instance in \cite{Sharma,SharmaRDL,Frenkel}. The main advantage of EM in this context is its simplicity in the implementation and the convergence stability, inherited from the EM machinery \cite{Dempster,Wu}. We refer to \cite[Sec.~1]{shumway1982approach} for a detailed discussion of the benefits and drawbacks of each class of methods. However, none of the aforementioned methods allow to compute a maximum a posteriori (MAP) estimate of the parameters in the LG-SSM. It is possible to design naive extensions by simply incorporating a prior term on the function to maximize. However, the specific strategies cannot cope with complicated prior terms. More precisely, the methods of the first class are limited to differentiable penalty terms, preventing the use of sparsity enhancing functions and constraints, which are of high interest in this context (see the discussion below). In the case of the EM algorithm,  the M-step has a closed form  for very limited priors (e.g., Gaussian). It gets intractable for most priors of interest and thus the existing framework of \cite{shumway1982approach} does not lead to any directly implementable algorithmic solution.

SSMs are powerful mathematical tools for forecasting and also bring interpretability about the hidden process, allowing to understand the uncovered relations in the state space. In this line, graphical modeling methods for time series have been proposed~\cite{Eichler2012,Bach04,Barber10}. Such representation of multivariate sequences interactions has applications in several domains such as biology \cite{Pirayre,Luengo19}, social network analysis \cite{Ravazzi}, and neuroscience \cite{Richiardi}. Graphical modeling often requires the introduction of sparsity priors  to meet interpretability and compactness (see for instance the celebrated graphical lasso approach \cite{Friedman07}). Spectral constraints (e.g., low rank) might also be useful to enhance clustering effects on the graphs \cite{chiu2021lowrank}. In both cases, this yields complicated MAP formulations, involving non differentiable terms, for which available methods for LG-SSM parameter estimations cannot be applied, as discussed in the previous paragraph.  
  

\noindent\textbf{Contributions.} {In this paper, we propose a novel framework called GraphEM for the estimation of model parameters in the LG-SSM, using prior knowledge. 
While the proposed methodology can be adapted to estimate all model parameters, here we are explicit on the estimate of the transition matrix of the SSM}, which is arguably the most complicated parameter to be estimated because (a) it is high-dimensional, (b) it intervenes in the auto-regressive process of the hidden state that cannot be observed, and (c) it is highly related to the inner structure of the complex system, requiring the incorporation of suitable prior knowledge. In the spirit of modeling complex systems as described above, the transition matrix is here supposed to be sparse.  GraphEM brings a new perspective in state-space modeling to interpret the interactions of the state dimensions between consecutive time steps as a sparse {directed} graph, encoding relations among the dimensions of the hidden state.  
\cblue{Namely, the hidden process follows an order-one auto-regressive process. We interpret the sparse transition matrix of the multi-variate process in a Granger causality manner \cite{granger1969investigating}. In particular, Granger causality (also called as predictive causality) is often considered, not as a true type of causality, but just as a metric of how well one time series allows to forecast a second one \cite{eichler2013causal}. From that perspective, the $i,j$ entry in the transition matrix in LG-SSMs encodes the weight in which the $j$-th time series in the hidden state affects the $i$-th time series in the next time step, being zero if it does not have any effect. Thus, a zero in the $i,j$ entry can be interpreted as if the $j$-th time series does not provide any further information to predict the $i$-th time series (given the other time series). In GraphEM, we allow for a variety of sparsity constraints in the transition matrix, accounting for realistic modeling in a plethora of applications. We discuss particular examples and  provide simulations both in controlled scenarios and in a wireless communication problem.}

GraphEM belongs to the family of EM algorithms \cblue{for MAP estimation}, alternating between an expectation (E)-step based on the RTS smoother that builds a majorizing function of the posterior distribution of the unknown given the data, and a sophisticated maximization (M)-step in which this function is maximized w.r.t. the unknown parameter. The proposed GraphEM algorithm involves novel methodology to incorporate realistic prior knowledge about the dynamic system, such as sparsity constraints. Specifically, the inclusion of non-Gaussian and possibly non-smooth priors requires the development of a new tailored optimization procedure in the M-step. We propose to address this challenge by resorting to a proximal primal-dual splitting methodology, that we design to suitably incorporate the desired priors. In a nutshell, the contributions of this paper are as follows:\footnote{A limited version of this work was presented by the authors in the
conference paper \cite{Chouzenoux2020}.}
\begin{itemize}
	\item Proposition of a novel graphical interpretation of the transition matrix within LG-SSMs based on (sparse) causal interactions among state dimensions in the Granger sense,
	\item \cblue{Derivation of an EM-based algorithm for computing a MAP estimate of} this matrix, with strong theoretical guarantees, 
	\item Design of a convergent convex optimization procedure for an efficient implementation of the M-step, able to account for a wide class of priors on the interpreted graph,
	\item {Presentation of two challenging numerical examples, namely a controlled scenario, and a problem of channel tracking in wireless communications. Various setups of sparse transition matrices and priors are tested, including block-sparsity penalties and nuclear norms constraints.}
\end{itemize}

The rest of the paper is structured as follows. Section~\ref{sec_back}  describes the model, the filtering/smoothing algorithms, and the background about the EM framework. The novel GraphEM algorithm is presented in Section~\ref{sec:proposed}, with a detailed explanation of the E-step, the proposed new optimization methodology for the M-step, \cblue{a discussion on the priors, both from the application and the methodological perspectives, and a convergence theorem}. The paper concludes with two numerical examples in Section~\ref{sec:experiment} and some concluding remarks in Section~\ref{sec:conclusion}.

\section{Background}
\label{sec_back}

\subsection{Notation}

We denote by $\|\x\|_2={\sqrt{\x^\top \x}}$ the Euclidean norm of $\x \in \Real^N$, where $\top$ states from the transpose operation and $\Real^N$ is the $N$-dimensional Euclidean space. We also introduce $\| \mathbf{X} \|_F$ and $\| \mathbf{X} \|_2$, the Frobenius norm and spectral norms (i.e., largest singular value), respectively, of elements $\mathbf{X} = (X(n,\ell))_{1 \leq n \leq N, 1 \leq \ell \leq M} \in \Real^{N \times M}$. $\Id_{N}$ is the identity matrix of $\Real^N$ and $\text{tr}$ is the trace operator. Bold symbols are used for matrix and vectors. The useful definitions of convex analysis are reminded on-the-fly throughout the paper. For these concepts, we rely on the notation in the textbook \cite{bauschke2017convex}. Furthermore, we introduce the shorter notation $\ctA$, for any constant independent from a variable $\A$. {Finally, given a sequence of elements $\{\x_k\}_{k=1}^K$ of length $K \geq 1$, we use the notation $\x_{k_1:k_2}$ to refer to the subsequence $\{\x_k\}_{k=k_1}^{k_2}$, for $1 \leq k_1 < k_2 \leq K$.}  

\subsection{Linear state-space model}
We consider the LG-SSM described, for $k=1,\ldots,K$, as
 \begin{align}
       \x_{k} &= \A \x_{k-1} + \q_{k},       \label{eq_model_state}
\\
     \y_k &= \H_k \x_k + \r_{k}, \label{eq_model_obs}
    \end{align}
where, 
\begin{itemize}
    \item $\{\x_{k} \}_{k=1}^K\in \mathbb{R}^{N_x}$ and  $\{\y_{k} \}_{k=1}^K\in \mathbb{R}^{N_y}$, are the hidden state and the observations, respectively, at each time $k$,
    \item $\A \in \mathbb{R}^{N_x \times N_x}$ is the transition matrix that we aim at estimating,
    \item $\{\H_k\}_{k=1}^K \in \mathbb{R}^{N_y \times N_x}$ are the observation model matrices, possibly varying with $k$, and are assumed to be known,
    \item $\{ \q_k \}_{k=1}^K \sim \mathcal{N}(0,\cred{\Q})$ is the i.i.d. state noise process, assumed to follow a zero-mean Gaussian model with known symmetric definite positive (SDP) covariance \cred{matrix $\Q \in \Real^{N_x \times N_x}$},
    \item $\{ \r_k \}_{k=1}^K \sim \mathcal{N}(0,\R_k)$ is the i.i.d. observation noise process, again zero-mean Gaussian with known SDP covariance matrices $\R_k \in \Real^{N_y \times N_y}$.
\end{itemize}
We assume an initial state distributed such that $\x_0 \sim \mathcal{N}(\x_0 ; \meanPrior_0, \covPrior_0)$ with known $\meanPrior_0 \in \mathbb{R}^{N_x}$ and SDP $\covPrior_0\in \Real^{N_x \times N_x}$. The state and the observation noises are mutually independent and also independent of the initial state~$\x_0$.


\subsection{Kalman filtering and smoothing}

In many applications (e.g., tracking), the goal is in the estimation of the hidden state $\{\x_{k} \}_{k=1}^K$ from observations $\{\y_{k} \}_{k=1}^K$. In the Bayesian/probabilistic setting, this translates into the computation, for every $k \in \{1,\ldots,K\}$, of the posterior distribution of $\x_k$. If one conditions on all observations available up to time $k$, $\y_{1:k} = \{ \y_j \}_{j=1}^k$, then the posterior probability density function (pdf), $p(\x_k|\y_{1:k})$, is the \emph{filtering} distribution. Conditioning on the whole set of observations $\y_{1:K}$, the posterior $p(\x_k|\y_{1:K})$ is the \emph{smoothing} distribution. 

Estimating the filtering and smoothing distributions is in general a challenging problem, since obtaining these distributions of interest is possible only in few models of interest. For instance, for the LG-SSM described in \eqref{eq_model_state}-\eqref{eq_model_obs}, it is possible to obtain the filtering and smoothing distributions, for $k=1,\ldots,K$, in the case where the model parameters $\A$, \cred{$\Q$}, $\{\H_k \}_{k=1}^K$, and $\{\R_k \}_{k=1}^K$ are known. Interestingly, these distributions can be obtained in an efficient sequential manner via the Kalman filter \cite{Kalman60} and the RTS smoother \cite{Briers05}. Algorithm \ref{alg_kf} describes the Kalman filter while Algorithm~\ref{alg_rts} describes the RTS smoother.

\begin{table}[!t]
\vspace{4mm}
    \centering
    \begin{tabular}{|p{0.95\columnwidth}|}
    \hline
\begin{Algoritmo}
\label{alg_kf}
Kalman Filter
\begin{enumerate}
  \item[] \textbf{Input.} Prior parameters $\meanPrior_0$\;and  $\covPrior_0$; model parameters $\A$, $\cred{\Q}$, $\{\H_k \}_{k=1}^K$, and $\{\R_k \}_{k=1}^K$; set of observations $\{\y_k \}_{k=1}^K$.
  
\item[] \textbf{Recursive step.} For $k=1,\ldots,K$  
\begin{enumerate}
  \item {{\sf Prediction/propagation step.}} 
\begin{eqnarray}
\meanPred_{k|k-1} &=& \A\meanFilt_{k-1} \\
\covPred_{k|k-1} &=& \A\covFilt_{k-1}\A^\top + \cred{\Q}
\label{eq_propagation_kf}
\end{eqnarray}
\item {{\sf Update step.}} 
\begin{eqnarray}
\nub_{k} &=& \H_k \meanPred_{k|k-1}\\
\v_{k} &=& \y_k - \nub_k \\
\S_{k} &=&  \H_k \covPred_{k|k-1} \H_k^\top + {\R_k} \\
\K_{k} &=&  \covPred_{k|k-1} \H_k^\top \S_k^{-1} \\
\meanFilt_{k} &=& \meanPred_{k|k-1} + \K_k \v_k\\
\covFilt_k &=& \covPred_{k|k-1} - \K_{k} \S_k \K_{k}^\top 
\label{eq_update_kf}
\end{eqnarray}
\end{enumerate}
\item[] \textbf{{Output.}} $\{\meanFilt_k, \covFilt_k \}_{k=1}^K$. Then, for each $k=1,...,K$: 
\begin{itemize}
  \item state filtering pdf: $p(\x_k|\y_{1:k}) = \gauss(\x_k ; \meanFilt_k, \covFilt_k)$
  \item observation predictive pdf: $p(\y_k|\y_{1:k-1}) = \gauss(\y_k;\nub_k,\S_k)$
  
\end{itemize}
\end{enumerate}
\end{Algoritmo}\\
        \hline
\end{tabular}
\end{table}

\begin{table}[!t]
\vspace{4mm}
    \centering
    \begin{tabular}{|p{0.95\columnwidth}|}
    \hline
\begin{Algoritmo}
\label{alg_rts}
RTS Smoother
\begin{enumerate}
  \item[] \textbf{Input.} Filtering parameters $\{\meanFilt_k, \covFilt_k \}_{k={0}}^K$ from the Kalman filter; model parameters $\A$ and \cred{$\Q$}.
\item[] \textbf{Initialization.} Set $\meanSmooth_{K} = \meanFilt_{K}$ and $\covSmooth_{K} = \covFilt_{K}$. 
 
\item[] \textbf{Recursive step.} For $k=K,K-1,...,{0}$  

{\begin{align}
\meanFilt_{k+1}^{-} &= \A\meanFilt_{k} \\
\covFilt_{k+1}^{-} &= \A\covFilt_{k}\A^\top + \cred{\Q}\\
\G_k &= \covFilt_{k}\A^\top \Big(\covFilt_{k+1}^{-} \Big)^{-1}\\
\meanSmooth_{k} &= \meanPred_{k|k-1} + \G_k \left(\meanSmooth_{k+1}  - \meanFilt_{k+1}^{-} \right)\\
\covSmooth_k &= \covPred_{k|k-1} - \G_k \left(\covSmooth_{k+1}  - \covFilt_{k+1}^{-} \right)\G_k^\top
\end{align}} 

\item[] \textbf{{Output.}} $\{\meanSmooth_k, \covSmooth_k \}_{k=1}^K$. Then, for each $k=1,...,K$: 
\begin{itemize}
  \item state smoothing pdf: {$p(\x_k|\y_{1:K}) = \gauss(\x_k ; \meanSmooth_k, \covSmooth_k)$} 
  
\end{itemize}
\end{enumerate}
\end{Algoritmo}\\
        \hline
\end{tabular}
\end{table}

 \subsection{EM framework for parameter estimation}
\label{sec:BackEM}

In this paper, we consider the more challenging setting in which some parameters of the LG-SSM are unknown, and must be estimated jointly with the hidden states inference.  
The problem of parameter estimation in SSM has been widely studied in the literature. Three main types of methods can be distinguished, namely (i) expectation-maximization (EM) algorithms \cite{shumway1982approach,Thiesson04,Sharma}, (ii) optimization-based methods \cite{Olsson}, and (iii) Monte Carlo methods \cite{Kantas09,luengo2020survey}. In the context of LG models, the EM strategy is particularly well suited, since it keeps a reduced computational cost while preserving part of the Bayesian interpretation~\cite{shumway1982approach}. 
\cblue{We now describe the rationale of applying the EM strategy for the estimation of the state matrix $\A$ in the LG-SSM of \eqref{eq_model_state}-\eqref{eq_model_obs}.} 
{In such context, the maximum likelihood (ML) estimate of $\A$ is not available in a closed form \cite{shumway1982approach}. Moreover, the maximum a posteriori (MAP) estimate approach is also intractable and remains unexplored to the best of our knowledge.}  

The MLEM algorithm builds iteratively an ML estimate of the LG-SSM parameters through the resolution of surrogate problems constructed following a majorization principle \cite{MoonEM}. For the sake of clarity, we describe here the resulting MLEM procedure for the LG-SSM case. Note that we focus here on the estimation of $\A$, though the MLEM for LG-SSM was initially introduced in~\cite{shumway1982approach} for estimating the state/observation and covariance noise matrices. In the sequel, we will denote $(\Ab^{(i)})_{i \in \mathbb{N}}  \in \mathbb{R}^{N_x \times N_x}$ the sequence of MLEM iterates, whose construction will be specified \cblue{below}. For every $\x_{0:K}$ with non zero probability, the log-likelihood function is
\begin{equation}
\log p(\y_{1:K} | \Ab) =  \log p(\x_{0:K},\y_{1:K} | \Ab) - \log p(\x_{0:K}|\y_{1:K},\Ab). \label{eq:ML1}
\end{equation}
\cblue{This function is continuously differentiable for $\Ab \in \mathbb{R}^{N_x \times N_x}$. Moreover, \eqref{eq:ML1} can be easily evaluated, using its recursive form:
\begin{equation}
\log p(\y_{1:K} | \Ab) =  \sum_{k=1}^K \frac{1}{2} \log | 2 \pi \S_k| + \frac{1}{2} \v_k^\top \S_k^{-1} \v_k, \label{eq:ML2}
\end{equation}
where $(\v_k,\S_k)_{1 \leq k \leq K}$ are obtained by the RTS Alg.~\ref{alg_rts} run for a given transition matrix $\Ab$. The gradient and Hessian of \eqref{eq:ML1} can also be evaluated with recursive formula (using, for instance, Fisher's identity \cite[Chap.~12]{Sarkka}). These properties are at the core of the optimization-based estimation methods in \cite{Gupta1974,Olsson}, unfortunately presenting an unstable behaviour, mostly due to the non-convexity of~\eqref{eq:ML1}. The MLEM algorithm \cite{shumway1982approach} proceeds differently, by building sequential lower bounds of \eqref{eq:ML1}, as we describe below.} Let $i \in \mathbb{N}$, associated with the current parameter estimate $\Ab^{(i)}$. We can take the expectation of \eqref{eq:ML1} over all possible values of the unknown state given $\Ab^{(i)}$, by multiplying both sides of \eqref{eq:ML1} by $ p(\x_{0:K}|\y_{1:K},\Ab^{(i)})$ and integrating over all states. Since $\int \log p(\y_{1:K} | \Ab) p(\x_{0:K}|\y_{1:K},\Ab^{(i)}) d\x_{0:K}  = \log p(\y_{1:K} | \Ab) $ (i.e., integration of a constant quantity), then
\small{
\begin{align}
& \log p(\y_{1:K} | \Ab) = \underbrace{\int p(\x_{0:K}|\y_{1:K},\Ab^{(i)}) \log p(\x_{0:K},\y_{1:K} | \Ab) d\x_{0:K}}_{\triangleq q(\Ab;\Ab^{(i)})} \nonumber \\
&\quad \underbrace{- \int p(\x_{0:K}|\y_{1:K},\Ab^{(i)}) \log p(\x_{0:K}|\y_{1:K},\Ab) d\x_{0:K}}_{\triangleq h(\Ab;\Ab^{(i)})}.
\label{eq:Eproof}
\end{align}
}
\normalsize
The latter equation holds for any $\Ab \in \mathbb{R}^{N_x \times N_x}$, including $\Ab = \Ab^{(i)}$, i.e.,
\begin{equation}
\log p(\y_{1:K} | \Ab^{(i)}) = q(\Ab^{(i)};\Ab^{(i)}) + h(\Ab^{(i)};\Ab^{(i)}). \label{eq:EMtang}
\end{equation}
Subtracting \eqref{eq:EMtang} from \eqref{eq:Eproof} yields
\begin{multline}
\log p(\y_{1:K} | \Ab) - \log p(\y_{1:K} | \Ab^{(i)}) \\
= q(\Ab;\Ab^{(i)}) - q(\Ab^{(i)};\Ab^{(i)})
+  h(\Ab;\Ab^{(i)}) - h(\Ab^{(i)};\Ab^{(i)}).
\end{multline}
Since the entropy is upper-bounded by the cross-entropy w.r.t. any other pdf (\emph{Gibb's inequality}),
 \begin{equation}
h(\Ab;\Ab^{(i)}) \geq h(\Ab^{(i)};\Ab^{(i)}),
\end{equation}
where the equality holds if and only if $\Ab = \Ab^{(i)}$. We can thus conclude that
\begin{equation}
\log p(\y_{1:K} | \Ab) - \log p(\y_{1:K} | \Ab^{(i)}) \geq q(\Ab;\Ab^{(i)}) - q(\Ab^{(i)};\Ab^{(i)}),
\end{equation}
that is,
\begin{equation}
\log p(\y_{1:K} | \Ab) \geq q(\Ab;\Ab^{(i)}) + \ctA. \label{eq:EMmaj}
\end{equation}
Again, the equality holds in \eqref{eq:EMmaj} if and only if $\Ab = \Ab^{(i)}$. Inequality \eqref{eq:EMmaj} is the cornerstone of the MLEM algorithm, which follows a majoration-minimization (MM) principle~\cite{hunter2004tutorial}. At each iteration $i \in \mathbb{N}$ of the MLEM method, the E-step computes the following expectation:
\begin{equation}
\small{
q(\Ab;\Ab^{(i)}) = \int p(\x_{0:K} | \y_{1:K},\Ab^{(i)}) \log p(\x_{0:K},\y_{1:K} | \Ab) \rm{d} \x_{0:K},
}
\label{eq:funQ0}
\end{equation}
satisfying \eqref{eq:EMmaj}. The M-step aims at maximizing $q(\Ab;\Ab^{(i)})$ with respect to $\Ab$, yielding $\Ab^{(i+1)}$. Thus, by construction, 
\begin{align}
\log p(\y_{1:K} | \Ab^{(i+1)}) & \geq q(\Ab^{(i+1)}  ;\Ab^{(i)}) + \ctA \\
& \geq q(\Ab^{(i)} ;\Ab^{(i)}) + \ctA \\
& = \log p(\y_{1:K} | \Ab^{(i)}).
\end{align}
\cblue{The MLEM guarantees the increase of the log-likelihood loss $\log p(\y_{1:K} | \Ab^{(i)})$  along iterations, which is equivalent to an increase of the ML loss~\cite{MoonEM,Dempster}. As shown in \cite{shumway1982approach}, the integral in \eqref{eq:funQ0} can be expressed as byproducts of the RTS smoother. This leads to the construction of an MLEM method to derive estimates of the parameters of an LG-SSM, jointly with the hidden states inference task. However, the aforementioned work did not include any prior knowledge on the parameters. Moreover, although the convergence of generic EM schemes has been established in \cite{Wu}, the required assumptions are not met in the case of the MLEM scheme for LG-SSM from \cite{shumway1982approach}, mostly due to the intricate recursive form of the ML loss. Finally, the derivations in \cite{shumway1982approach} were restricted to the case of constant \cred{matrices $\R$ and $\H$ in the observation model equation \eqref{eq_model_obs}.}}

\section{The GraphEM algorithm}
\label{sec:proposed}
In this section, we present a generalized version of this EM approach, able to encompass {time-varying \cred{observation model} as well as to \cblue{yield a MAP estimate of LG-SSM transition matrix, for a large class of priors}. We explicit both the E and M steps, and introduce a novel efficient iterative solver for performing the latter \cblue{with assessed convergence guarantees}. \cblue{We show the convergence of the resulting EM-based approach under reasonable assumptions.}

\subsection{Summary of GraphEM}
In this section, we present a general framework for the estimation of the transition matrix $\A$ of the state model in Eq. \eqref{eq_model_state} under suitable prior assumption. This allows to integrate useful sparsity and spectral constraints on $\A$, with the aim of promoting the interpretability and the stability of the inferred LG-SSM. These constraints are encoded in the prior distribution $p(\A)$, as we discuss in Section \ref{sec_prior}. GraphEM aims at providing the maximum a posteriori (MAP) estimator of $\A$.  More specifically, let us denote the posterior of the unknown parameter, $p(\A|\y_{1:K})$, where the hidden states have been marginalized. It is direct to show, using Bayes rule and the (strictly increasing) logarithmic function, that the maximum of $p(\A|\y_{1:K}) \propto p(\A)p(\y_{1:K}|\A)$ coincides with the minimum of the loss function
\begin{align}
(\forall \A \in \mathbb{R}^{N_x \times N_x}) \quad 
\loss_K(\A) & \triangleq \loss_0(\A)
  \cblue{+\loss_{1:K}(\A)},
    \label{eq:phik}
    \end{align}
where we denote the regularization function as 
\begin{equation}
(\forall \A \in \mathbb{R}^{N_x \times N_x}) \quad \loss_0(\A) \triangleq - \log p(\A), \label{eq:priorgen}
\end{equation}
\cblue{
and the neg-log-likelihood as
\begin{equation}
(\forall \A \in \mathbb{R}^{N_x \times N_x}) \quad \loss_{1:K}(\A) \triangleq - \log p(\y_{1:K}| \A), \label{eq:loglossl}
\end{equation}
}
with $\log p(\y_{1:K}| \A)$ defined in Eq.~\eqref{eq:ML1}. As commented above, it is not straightforward to find a minimizer of \eqref{eq:phik}, even for the case without regularization function. 

 The proposed algorithm, GraphEM, is summarized in Algorithm~\ref{algo:GRAPHEM}. GraphEM is a type of expectation-maximization (EM) method that runs for several iterations, alternating between the expectation (E)-step and the maximization (M)-step. {The E-step can be seen as a generalization of the one in MLEM from \cite{shumway1982approach}, to the case of time-varying \cred{observation matrices $\{\R_k\}_{k=1}^K$ and $\{\H_k\}_{k=1}^K$}. Moreover, it also accounts for a prior term on $\A$ (see Section \ref{sec_prior}), so as to reach a MAP estimate of the  transition matrix $\A$. The M-step thus becomes much more intricate than in the aforementioned MLEM. In particular, no close form is longer available for the update of the transition matrix. To overcome this challenge, we propose an iterative solver with sound convergence guarantees, relying on modern tools from convex analysis (see Section \ref{sec_m_step}).}  

 At each iteration $i \in \mathbb{N}$, the expectation function $q(\A;\A^{(i)})$ given in \eqref{eq:funQ0} is first computed in the E-step. This function is created by running the Kalman filter followed by the RTS smoother with the state matrix set to the estimate of the previous iteration, i.e.,  equals to $\A^{(i)}$. We then construct
\begin{align}
\mathcal{Q}(\A;\A^{(i)}) \triangleq -q(\A;\A^{(i)}) + \loss_0(\A) + \ctA, 
\label{eq:FuncQp}
\end{align}
a majorizing approximation of the MAP loss function \eqref{eq:phik}. Then, a new estimate of the transition matrix, $\A^{(i+1)}$, is obtained in a corrected M-step, as the minimizer of the regularized surrogate in \eqref{eq:FuncQp}. As we will show in Section \ref{sec:conver}, GraphEM aims at providing ultimately an estimate of the maximum of $p(\A|\y_{1:K})$, i.e., the MAP estimate of $\A$. The last iteration of GraphEM also provides, as a byproduct, the filtering and smoothing distribution, given this last version of the transition matrix.

\begin{table}[!t]
\vspace{4mm}
    \centering
    \begin{tabular}{|p{0.95\columnwidth}|}
    \hline
\begin{Algoritmo}
\label{algo:GRAPHEM}
GraphEM algorithm
\begin{enumerate}
  \item[] \textbf{Inputs.} Prior parameters $\meanPrior_0$\;and  $\covPrior_0$; model parameters \cred{$\Q$}, $\{\H_k \}_{k=1}^K$, and $\{\R_k \}_{k=1}^K$; set of observations $\{\y_k \}_{k=1}^K$, and prior $p(\A)$. Precisions $(\varepsilon,\xi) >0$.  
\item[] \textbf{Initialization.} Set $\Ab^{(0)} \in \mathbb{R}^{N_x \times N_x}$.
\item[] \textbf{Recursive step.} For $i=0,1,\ldots$:   
 \begin{enumerate}
	 \item[(E step)] Run the Kalman filter and RTS smoother using transition matrix~$\A^{(i)}$. 
	\item[] Calculate $(\Psib,\Deltab,\Phi)$ using \eqref{eq:Sig}-\eqref{eq:C}-\eqref{eq:Phi}.
	\item[] Build function $\A \mapsto \mathcal{Q}(\A,\A^{(i)})$ using \eqref{eq:FuncQp}.
	\item[(M step)] Run Algorithm~\ref{algo:MS} with precision $\xi$ to solve
	\item[] $\A^{(i+1)} = \text{argmin}_{\A}\mathcal{Q}(\A,\A^{(i)})$.
 \end{enumerate}
\item[] If $\|\A^{(i+1)}-\A^{(i)}\|_F \leq \varepsilon \|\Ab^{(i)}\|_F$, \textbf{stop the recursion}.
 %
\item[] \textbf{{Output.}} State filtering/smoothing pdfs along with MAP estimate of the transition matrix.
\end{enumerate}
\end{Algoritmo}\\
        \hline
\end{tabular}
\end{table}

\subsection{\cblue{Explicit E-step}}
\label{ssec_Estep}

In this section, we derive the explicit E-step for the case of unknown $\A$. 
Let us first define the log-likelihood of the observations and states (that we recall, are not observed) that, due to the Markovian structure of the state space model in Eq.~\eqref{eq_model_state}, takes this form:
\begin{align}
\log p(\x_{0:K},\y_{1:K} | \Ab)  &= \log p(\x_0) + \sum_{k=1}^K \log p(\x_{k}|\x_{k-1}, \Ab) \nonumber \\
&+ \sum_{k=1}^K \log p(\y_{k}|\x_k).
\label{eq_log_joint_lk}
\end{align}
Following Section~\ref{sec:BackEM}, we must compute the expectation function $q(\Ab;\Ab^{(i)})$, given in \eqref{eq:funQ0}, i.e., the  log-likelihood of the observations and states integrated against the smoothing posterior $p(\x_{0:K},\y_{1:K} | \Ab)$, in such a way the states are marginalized and, therefore, the resulting function depends only on the model parameters. Function $\mathcal{Q}(\A;\A^{(i)})$ used in the Alg. \ref{algo:GRAPHEM} is then deduced easily from \eqref{eq:FuncQp}. 

{In the case of the LG-SSM in Eqs.~\eqref{eq_model_state}-\eqref{eq_model_obs}, {we now show that} there exists a closed-form expression for the integral \eqref{eq:funQ0}. Our approach uses the outputs of the RTS smoother and {generalizes} \cite[Theo. 12.4]{Sarkka}.  
{In a nutshell, we will demonstrate that} (a) the three log-quantities in Eq. \eqref{eq_log_joint_lk} are quadratic, and (b) the resulting integral \eqref{eq:funQ0} is tractable. Our proof lies in that the Kalman filter in Alg. \ref{alg_kf} provides an exact {form} $p(\x_k|\y_{1:k}) = \mathcal{N}(\x_k|\meanFilt_k,\covFilt_k)$, for every $k=1,\ldots,K$. The sequence of smoothing distributions (conditioned to the whole set of observations), can be also computed exactly by the RTS smoother in Alg. \ref{alg_rts}, yielding $p(\x_k|\y_{1:K}) = \mathcal{N}(\x_k|\meanSmooth_k,\covSmooth_k)$, for every $k=1,\ldots,K$.}  

Note that the computation of \eqref{eq:funQ0} requires the marginalization of the three terms in \eqref{eq_log_joint_lk}. However, since  M-step aims at minimizing $\mathcal{Q}(\A;\A^{(i)})$ w.r.t. $\A$,  only terms of \eqref{eq:funQ0} depending on variable $\A$ are in practice needed for the update, i.e., the second term of \eqref{eq_log_joint_lk}: 

{\footnotesize
\begin{align}
f& (\x_{1:K},\A)  \triangleq  \sum_{k=1}^K \log p(\x_{k}|\x_{k-1}, \Ab) \\
&= -\frac{{1}}{2}\sum_{k=1}^K \left(\left(\x_k - \A \x_{k-1} \right)^\top \cred{\Q}^{-1}\left(\x_k - \A \x_{k-1}  \right) + 
\log|{2\pi} \cred{\Q}|\right).
\label{eq_log_joint_lk_2}
\end{align}
}
Then, skipping the constant terms independent from $\A$, Eq.~\eqref{eq:funQ0} can be rewritten as
{\footnotesize
\begin{align} 
q(\A;& \Ab^{(i)}) = \int f(\x_{1:K},\A) p(\x_{0:K} | \y_{1:K},\Ab^{(i)}) d\x_{0:K} +  \ctA,\\
&= \int \left( -\frac{{1}}{2}\sum_{k=1}^K\left(\x_k - \A \x_{k-1} \right)^\top {\Q}^{-1}\left(\x_k - \A \x_{k-1}  \right) \right)\nonumber \\
&\quad \times p(\x_{0:K} | \y_{1:K},\Ab^{(i)}) d\x_{0:K} +  \ctA\\
&=  -\frac{{1}}{2}  \sum_{k=1}^K\int\left(\x_k - \A \x_{k-1} \right)^\top {\Q}^{-1}\left(\x_k - \A \x_{k-1}  \right) \nonumber\\
&\quad \times p(\x_{0:K} | \y_{1:K},\Ab^{(i)}) d\x_{0:K} +  \ctA.
\end{align}} 
\cblue{Then, we marginalize part of the variables to obtain}
{\footnotesize
\begin{align} 
q(\A;& \Ab^{(i)}) =  -\frac{{1}}{2}  \sum_{k=1}^K\int\left(\x_k - \A \x_{k-1} \right)^\top {\Q}^{-1}\left(\x_k - \A \x_{k-1}  \right) \nonumber\\
&\quad \times p(\x_{k:k-1} | \y_{1:K},\Ab^{(i)}) d\x_{k:k-1} +  \ctA\\
&=  -\frac{1}{2}  \sum_{k=1}^K\int\left(\x_k - \A \x_{k-1} \right)^\top {\Q}^{-1}\left(\x_k - \A \x_{k-1}  \right)  \nonumber\\
&\quad \times \mathcal{N}(\x_{k:k-1}|\meanSmooth_{k:k-1},\covSmooth_{k:k-1}) d\x_{k:k-1} +  \ctA,
\label{eq_I2}
\end{align}
}               
where $\mathcal{N}(\x_{k:k-1}|\meanSmooth_{k:k-1},\covSmooth_{k:k-1})$ denotes the joint smoothing distribution of two consecutive states $\x_{k:k-1} = [\x_{k} ; \x_{k-1}] \in \Real^{2N_x}$. The latter is Gaussian with mean 
\begin{align}
\meanSmooth_{k:k-1} = [\meanSmooth_{k} ; \meanSmooth_{k-1}],
\end{align}
and covariance
\begin{align}
{\covSmooth_{k:k-1} = [\covSmooth_{k} , \covSmooth_{k} \G_{k-1}^\top;\G_{k-1}\covSmooth_{k} , \covSmooth_{k-1}].}
\end{align}
The matrix  
$\G_{k} = \covPred_k \A^{(i)\top} \left(\A^{(i)}\covPred_k \A^{(i)\top} + {\Q} \right)$
follows from the derivation of the RTS smoother via manipulations of Gaussian pdfs (see for instance \cite[Theorem 8.2]{Sarkka}).  
Then, by defining $\widetilde{\Ab} = [\Id_{N_x} , - \Ab]$, Eq. \eqref{eq_I2} turns
{\footnotesize
\begin{align}
q(\A;\Ab^{(i)}) & = -\frac{{1}}{2} \sum_{k=1}^K\int\left( \widetilde{\Ab} \x_{k:k-1} \right)^\top {\Q}^{-1}\left(\widetilde{\Ab} \x_{k:k-1}   \right)  \nonumber\\
& \quad \times \mathcal{N}(\x_{k:k-1}|\meanSmooth_{k:k-1},\covSmooth_{k:k-1}) d\x_{k:k-1} +  \ctA\\
& = - \frac{{1}}{2} \sum_{k=1}^K \int \x_{k:k-1}^\top (\widetilde{\Ab}^\top {\Q}^{-1} \widetilde{\Ab}) \x_{k:k-1} \nonumber\\
& \quad \times \mathcal{N}(\x_{k:k-1}|\meanSmooth_{k:k-1},\covSmooth_{k:k-1}) d\x_{k:k-1} +  \ctA.
\label{eq:I2temp0}
\end{align}
}
We now apply equality \eqref{eqappA} in Appendix \ref{appendix_i1} (with {$\X \equiv \x_{k:k-1}$, $\mub \equiv 0$, $\Sigmab^{-1} \equiv \widetilde{\Ab}^\top {\Q}^{-1} \widetilde{\Ab}$, $\widetilde{\x} \equiv \meanSmooth_{k:k-1}$ and $\widetilde{\P} \equiv \covSmooth_{k:k-1}$}) to the integral term in \eqref{eq:I2temp0} (equality \eqref{eq:I2temp}(a)) and then the result \eqref{eqappb} in Appendix~\ref{appendix_i2} (equality \eqref{eq:I2temp}(b)):
{\footnotesize
\begin{align}
& \int  \x_{k:k-1}^\top \widetilde{\Ab}^\top {\Q}^{-1} \widetilde{\Ab} \x_{k:k-1} \mathcal{N}(\x_{k:k-1}|\meanSmooth_{k:k-1},\covSmooth_{k:k-1}) d\x_{k:k-1} 
\nonumber\\
& \overset{(a)}{=} 
\text{tr}\left(\widetilde{\Ab}^\top {\Q}^{-1} \widetilde{\Ab} (\covSmooth_{k:k-1} + \meanSmooth_{k:k-1} (\meanSmooth_{k:k-1})^\top)\right), \notag \\
& \overset{(b)}{=}  \text{tr}\left( {\Q}^{-1} (\covSmooth_k + \meanSmooth_{k} (\meanSmooth_{k})^\top - \Ab   (\Gb_{k-1} \covSmooth_k + \meanSmooth_{k-1} (\meanSmooth_{k})^\top )\nonumber\right. \\
& \left. {-} (\covSmooth_k \Gb_{k-1}^\top  + \meanSmooth_{k} (\meanSmooth_{k-1})^\top ) \Ab^\top 
+ \Ab ( \covSmooth_{k-1} + \meanSmooth_{k-1} (\meanSmooth_{k-1})^\top) \Ab^\top ) \right).
\label{eq:I2temp}
\end{align}
}
Finally, summing \eqref{eq:I2temp} for $k$ from $1$ to $K$, using the additivity property of the trace, and plugging the result into \eqref{eq:I2temp0}, yields 
{\small
{
\begin{equation}
q(\A;\Ab^{(i)}) = -\frac{1}{2} \text{tr}\left(\Q^{-1}(  
\Psib - \Deltab \Ab - \Ab \Deltab^\top + \Ab \Phib \Ab^\top) \right) +  \ctA,
\label{eq:g}
\end{equation}
}
}
with
\cred{
\begin{align}
\Psib & = \sum_{k=1}^K  \left(\covSmooth_k + \meanSmooth_{k} (\meanSmooth_{k})^\top\right), \label{eq:Sig}\\
\Deltab & =  \sum_{k=1}^K  \left(\covSmooth_k \Gb_{k-1}^\top  + \meanSmooth_{k} (\meanSmooth_{k-1})^\top\right), \label{eq:C} \\
\Phib & = \sum_{k=1}^K  \left(\covSmooth_{k-1} + \meanSmooth_{k-1} (\meanSmooth_{k-1})^\top\right).
\label{eq:Phi}
\end{align}
}
{The terms $(\Psib,\Deltab,\Phib)$ depend, in an implicit manner, of $\A^{(i)}$, that is the value of the transition matrix used when running the E-step (i.e., Kalman/RTS iterates). We omitted this dependency for the sake of readability.}

Then, using \eqref{eq:EMmaj}, we deduce that \eqref{eq:FuncQp}, where function $q$ given in \eqref{eq:g}, majorizes the MAP loss function $\loss_K$ in Eq.~\eqref{eq:phik} for every $\A \in \mathbb{R}^{N_x \times N_x}$.  {As a special case, when no prior is included and the noise covariance and observation matrices do not vary with $k$, we retrieve the result \cite[Theo.12.4]{Sarkka}.}  
 
\subsection{Computation in the M-step}
\label{sec_m_step}

{The M-step at iteration $i \in \mathbb{N}$ amounts to minimizing function $\mathcal{Q}(\A ;\A^{(i)})$ given in \eqref{eq:FuncQp}. Following the computations of the E-step, and particularly the result in \eqref{eq:g}, we can express this function in a generic form: 
\begin{equation}
(\forall \A \in \mathbb{R}^{N_x \times N_x}) \quad \mathcal{Q}(\A ;\A^{(i)}) = \sum_{m=1}^M f_m(\A),
\label{eq:FuncQsum}
\end{equation}
where 
\cred{
\begin{multline}
(\forall \A \in \mathbb{R}^{N_x \times N_x}) \\
f_1(\A) = \frac{1}{2}  \text{tr} \left(\Q^{-1}(\Psib - \Deltab \A^\top - \A \Deltab^\top + \A \Phib \A^\top) \right),\label{eq_f1}\\
\end{multline}
}
and $\sum_{m=2}^M f_m(\A) = \loss_0(\A)$ is the regularization term. We recall that, for every $\A \in \mathbb{R}^{N_x \times N_x}$, $f_1(\A) = -q(\A,\A^{(i)}) + \ctA$ (i.e., up to a constant independent from $\A$) while $\loss_0(\A) = -\log(p(\A))$. The assumed sum structure for $\loss_0$ allows us to account for factorizing priors.}

Function $f_1$ in \eqref{eq_f1} is quadratic and convex on $\mathbb{R}^{N_x \times N_x}$. We furthermore assume that each $\{f_m\}_{m=2}^M$ involved in the regularization term is proper (i.e., with non empty domain), convex, and lower semi-continuous on $\mathbb{R}^{N_x \times N_x}$, and such that the set of minimizers of \eqref{eq:FuncQsum} is non-empty. Function~\eqref{eq:FuncQsum} consequently reads as {a sum of a quadratic function, and convex possibly non smooth terms.} This paves the way for the application of primal-dual proximal approach for its minimization. Primal-dual proximal splitting (PDPS) algorithms \cite{komodakis2015playing} rely on the fundamental tool called the proximity operator, whose definition is stated as follows. For a proper, lower semi-continuous and convex function $f: \mathbb{R}^{N_x \times N_x} \mapsto (-\infty,+\infty]$, the proximity operator\footnote{See also \url{http://proximity-operator.net/}} of $f$ at $\widetilde{\A} \in \mathbb{R}^{N_x \times N_x}$ is defined as~\cite{Combettes2011}
\begin{equation}
    \text{prox}_f(\widetilde{\A}) = \text{argmin}_\A \left( f(\A) + \frac{1}{2}\| \A - \widetilde{\A}\|^2_F \right).
\end{equation} 
Given this tool, a generic PDPS method can iteratively minimize \eqref{eq:FuncQsum} by processing sequentially the terms $\{f_m\}_{m=1}^M$, either through their gradient or their proximity operator. The convergence of the sequence to a minimizer of \eqref{eq:FuncQsum} is then guaranteed, under specific rules on the algorithm hyperparameters (e.g., the stepsize). {A large number of algorithms can be built from this generic strategy, with different practical efficiency, depending on several factors such as the order of the updates, the way to process linear operators, the stepsize rules, the use or not of randomized block updates, etc. \cite{raguet2013generalized,chambolle2011first,Combettes2021}.} 

{
On the one hand, following the comparative analysis from \cite{Glaudin2019,Briceno2021}, we will prefer an algorithm that activates each terms via their proximity operator. Function $f_1$ is quadratic and with a close form for its proximity operator. Indeed, for every $\vartheta>0$, for every $\A \in \mathbb{R}^{N_x \times N_x}$,}
{\small
\begin{equation}
    \text{prox}_{\vartheta f_1}(\A) =  \operatorname{lyapunov}\left(\vartheta \Q^{-1}, \Phib^{-1}, {\A}\Phib^{-1} + \vartheta \Q^{-1} \Deltab \Phib^{-1}\right),
                \label{eq:proxf1}
\end{equation}
}
where $A = \operatorname{lyapunov}(X,Y,Z)$ provides the solution to the Lyapunov equation $XA + AY = Z$ \cite{Combettes2011}. If $\Q = \sigma_{\Q}^2 \rm{\Id}_{N_x}$ for some $\sigma_Q>0$, then \eqref{eq:proxf1} simplifies into
\begin{equation}
{\small
    \text{prox}_{\vartheta f_1}(\A) =  \left(\frac{\vartheta}{\sigma_{\Q}^2} \Deltab + {\A} \right)  \left(\frac{\vartheta}{\sigma_{\Q}^2}\Phib +  \rm{\Id}_{N_x}  \right)^{-1}.
                \label{eq:proxf1s}
                }
\end{equation}
{On the other hand, it might be beneficial to impose the M-step update to satisfy certain structural properties, such as sparsity, regardless the precision level of its implementation.  
For the two aforementioned reasons, we opt for the monotone+skew (MS) algorithm from \cite{Briceno2011}, described in Algorithm~\ref{algo:MS}.}. {More precisely, we assume without loss of generality that $f_M$ is our sparsity-enhancing term. We then propose an implementation of the approach of \cite{Briceno2011} where we particularize $f_M$ while all the remaining terms $\{f_m\}_{1 \leq m \leq M-1}$ are processed in a consensus-based manner \cite{PesquetRepetti2015,komodakis2015playing}.} 
In this way, the output of Alg.~\ref{algo:MS} inherits the structure of the proximity operator of $f_M$. For instance, if $f_M$ is the $\ell_1$ norm then the output of Alg.~\ref{algo:MS} is sparse by construction~\cite{BeckISTA,Daubechies}, whatever the value of the precision parameter $\xi$. Algorithm~\ref{algo:MS} has two other parameters besides the precision level, namely the stepsizes $(\gamma,\lambda)$ whose choice is dictated by the convergence analysis. Under the range settings of Alg.~\ref{algo:MS}, we can establish the following Proposition~\ref{ref:theoMS}. 

{                                  
\begin{Proposicion}
\label{ref:theoMS}
Assume that, for every $m \in \{1,\ldots,M\}$, function $f_m$ is convex, proper, and lower semicontinuous on $\mathbb{R}^{N_x \times N_x}$. Then, the sequences $\{\A_n^M\}_{n \in \mathbb{N}}$ and $\{\Z_n^M\}_{n \in \mathbb{N}}$ converge to a minimizer of \eqref{eq:FuncQsum}. 
\end{Proposicion}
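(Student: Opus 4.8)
The plan is to recognize Algorithm~\ref{algo:MS} as a particular instance of the monotone+skew splitting method of \cite{Briceno2011}, and to invoke the convergence theorem established there. The first step is to rewrite the minimization of \eqref{eq:FuncQsum} as a monotone inclusion. Since each $f_m$ is proper, convex, and lower semicontinuous on the finite-dimensional Hilbert space $\mathbb{R}^{N_x \times N_x}$ (equipped with the Frobenius inner product), the sum rule for subdifferentials — valid here because $f_1$ has full domain and is continuous — tells us that $\A^\star$ minimizes \eqref{eq:FuncQsum} if and only if $0 \in \sum_{m=1}^M \partial f_m(\A^\star)$. We then lift this to a product space: introducing the consensus constraint that couples copies $\A^1,\ldots,\A^{M-1}$ of the variable handled by $\{f_m\}_{1\le m\le M-1}$ together with the separate copy acted on by $f_M$, the problem becomes finding a zero of the sum of a maximally monotone operator (the separable subdifferentials, together with the normal cone to the consensus subspace) and a bounded skew-symmetric linear operator (encoding the linear coupling). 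This is exactly the template to which the monotone+skew algorithm of \cite{Briceno2011} applies.

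The second step is to verify the hypotheses of the cited convergence result in this lifted formulation. The key requirements are: (i) maximal monotonicity of the operator built from the $\partial f_m$'s and the normal cone, which follows from convexity, properness, and lower semicontinuity of each $f_m$ together with standard calculus of maximally monotone operators in finite dimension; (ii) the existence of a zero, i.e.\ a Kuhn–Tucker point, which is guaranteed by the standing assumption that the set of minimizers of \eqref{eq:FuncQsum} is nonempty, combined with a constraint qualification that holds automatically since the coupling operator is linear with full domain; and (iii) the stepsize condition on $(\gamma,\lambda)$ relating them to the norm of the skew operator (and, where relevant, to cocoercivity constants). The range settings prescribed in Algorithm~\ref{algo:MS} are chosen precisely so that (iii) holds, so this amounts to matching our parameters against those in \cite{Briceno2011}.

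The third step is bookkeeping: identifying our iterates $\{\A_n^M\}_{n\in\mathbb{N}}$ and $\{\Z_n^M\}_{n\in\mathbb{N}}$ with the relevant primal and dual sequences of the abstract algorithm, and noting that their limit point, projected back from the product space via the consensus identification, is a minimizer of \eqref{eq:FuncQsum}. The convergence of the full primal–dual sequence in \cite{Briceno2011} then transfers directly to the claimed convergence of these two sequences.

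The main obstacle I anticipate is not the convergence machinery itself — which is essentially a citation once the reduction is in place — but rather making the product-space reformulation and the consensus encoding precise enough that the operator split (maximally monotone part versus skew-symmetric linear part) genuinely matches the structure assumed in \cite{Briceno2011}, and confirming that the proximity-operator computation for $f_1$ in \eqref{eq:proxf1} (the Lyapunov-equation solve) is exactly the resolvent step the abstract algorithm calls for. Getting the lifting and the hyperparameter correspondence exactly right is where the real care is needed; everything else follows from the established theory.
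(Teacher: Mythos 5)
Your proposal is correct and follows essentially the same route as the paper: a consensus/product-space reformulation that keeps $f_M$ separate and stacks $f_1,\ldots,f_{M-1}$ through a linear (stacking) operator, after which Algorithm~\ref{algo:MS} is identified with the monotone+skew primal--dual scheme of \cite{Briceno2011} and its convergence result (there, Prop.~4.2 applied to the composite form $g(\mathbf{L}\A)+f_M(\A)$) is invoked. The only difference is cosmetic: you phrase the reduction at the level of monotone inclusions and explicitly flag the qualification and stepsize checks, whereas the paper works directly with the composite minimization and cites the result, so the substance is the same.
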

\begin{proof}
The proof relies on applying the consensus-based splitting from \cite[Sec. III]{komodakis2015playing} to $\sum_{m=1}^{M-1} f_m$. Let us introduce $\mathbf{L} = [\Id_{N_x},\ldots,\Id_{N_x}]^\top \in \mathbb{R}^{(M-1) N_x \times N_x}$  and $g: \mathbb{R}^{(M-1) N_x \times N_x} \to (-\infty,+\infty]$ such that, for every $\V = [\V_1^\top,\ldots,\V_{M-1}^\top]^\top\in \mathbb{R}^{(M-1) N_x \times N_x}$, $g(\V) = \sum_{m=1}^{M-1} f_m (\V_m)$. Then, minimizing \eqref{eq:FuncQsum} is equivalent to minimize
\begin{equation}
(\forall \A \in \mathbb{R}^{N_x \times N_x}) \quad g(\mathbf{L} \A) + f_M(\A).
\end{equation}
By construction, $\|\mathbf{L}\|_2 = M-1$. Moreover, for every $\V = [\V_1^\top,\ldots,\V_{M-1}^\top]^\top\in \mathbb{R}^{(M-1) N_x \times N_x}$, $\mathbf{L}^\top \V = \sum_{m=1}^{M-1} \V_m$ and $\text{prox}_g(\V) = [\text{prox}_{f_1}(\V_1)^\top,\ldots,\text{prox}_{g_{M-1}}(\V_{M-1})^\top]^\top$. Then, the proposed Alg. \ref{algo:MS} identifies with \cite[Eq. (4.8)]{Briceno2011} and applying \cite[Prop. 4.2]{Briceno2011} concludes the proof. 
\end{proof}
}
Typical choices for setting the hyper-parameters in Alg. \ref{algo:MS}, satisfying the range assumptions and adopted in our experiments, are
\begin{equation}
\lambda = \frac{0.9}{M}, \quad \gamma = \frac{1-\lambda}{M-1}.
\end{equation}
In practice, the algorithm is stopped as soon as \eqref{eq:FuncQsum} stabilizes. Note that a warm start initialization strategy is employed in Alg. \ref{algo:MS}. The so-called dual variables $\{\V_0^m\}_{m=1}^M$ are set to the previous estimate of the state matrix, that is $\A^{(i)}$. This was observed to yield considerable reduction of required iterations to reach our stopping criterion, when compared to a cold start (setting initial dual variables to zero, for instance). {Our implementation of MS processes separately function $f_M$, and the other terms $(f_m)_{1 \leq m \leq M-1}$. In particular, the elements of the converging sequence $\{\A_n^M\}_{n \in \mathbb{N}}$ of Alg. \ref{algo:MS} are outputs of proximity operator for $f_M$, and thus are sparse for suitable choice of this regularization term. This feature is not present in most standard implementations of primal-dual proximal splitting techniques.}

\begin{table}[!t]
\vspace{4mm}
    \centering
    \begin{tabular}{|p{0.95\columnwidth}|}
    \hline
\begin{Algoritmo}
\label{algo:MS}
MS algorithm for GraphEM M-step
\begin{enumerate}
  \item[] \textbf{Inputs.} $\Ab^{(i)}, \Psib, \Deltab, \Phi\cred{, \Q}$, and prior $p(\A)$. Precision $\xi>0$.
\item \textbf{Setting.} Set stepsizes $\lambda \in (0,1/M)$, $\gamma \in [\lambda,(1-\lambda)/(M-1)]$.
  \item \textbf{Initialization.} For every $m \in \{1,\ldots,M\}$, $\V_0^m = \Ab^{(i)}$.
\item \textbf{Recursive step.} For $n=1,2,\ldots$:  
\begin{equation}
\begin{array}{l}
\W_n^m = \V_n^m + \gamma \V_n^M \, (\forall m \in \{ 1,\ldots,M-1\})\\
\W_n^M = \V_n^M - \gamma \sum_{m=1}^{M-1} \V_n^m\\
\A_n^m = \W_n^m - \gamma \operatorname{prox}_{f_m/\gamma}(\W_n^m) \, (\forall m \in \{ 1,\ldots,M-1\})\\
\A_n^M = \operatorname{prox}_{\gamma f_M}(\W_n^M)\\
\Z_n^m  = \A_n^m + \gamma  \A_n^M \, (\forall m \in \{ 1,\ldots,M-1\})\\
\Z_n^M  = \A_n^M - \gamma \sum_{m=1}^{M-1} \A_n^m\\
\V_{n+1}^m = \V_n^m - \W_n^m + \Z_n^m \, (\forall m \in \{ 1,\ldots,M\}).
 \end{array}
\label{eq_MS}
\end{equation}
\item[] If $|\mathcal{Q}(\A_n^M,\A^{(i)})-\mathcal{Q}(\A_{n-1}^M,\A^{(i)})| \leq \xi$, \textbf{stop the recursion}.
 %
\item[] \textbf{{Output.}} Transition matrix update, $\A^{(i+1)} = \A_n^M$.
\end{enumerate}
\end{Algoritmo}\\
        \hline
\end{tabular}
\end{table}

\subsection{Choice of the prior}
\label{sec_prior}

Let us now explicit choices for the prior $p(\A)$, and thus for the regularization function $\loss_0$, that are encompassed by our study. We will focus on a hybrid form for the regularization function, such that
\begin{equation}
(\forall \A \in \mathbb{R}^{N_x \times N_x}) \quad \loss_0(\A) = \sum_{m=2}^M f_m(\A).
\end{equation}
For the sake of readability, we denote $f$ a possible regularization function, keeping in mind that $\loss_0$ might combine various terms, to promote various properties in the state matrix $\A$. All these terms would be then processed in a parallel manner in Alg. \ref{algo:MS}, through their proximity operator. 
 
Let us first discuss a particularly useful choice for $f$ covered by our study. An important matter is to make sure that the LG-SSM resulting from the parameter identification phase (here, the EM procedure) presents good structural properties. In particular, one may require that the first order auto-regressive model inherent to the {state process in} LG-SSM is stable, in order to avoid any numerical divergence for large values of $K$. The stability is directly related to the spectral properties of matrix $\A$ in \eqref{eq_model_state}. {As a result, a sufficient condition for the LG-SSM to be stable (i.e., not diverging with $K \to \infty$) is to be parameterized by an $\A$ parameter with singular values less than one~\cite{bird2019customizing,shumway2000time}. This condition can be incorporated within our framework}, by defining
\begin{align}
(\forall \A \in \mathbb{R}^{N_x \times N_x}) \quad f(\A) & = \begin{cases}
0 & \text{if } \A \in \mathcal{S}\\
+\infty & \text{elsewhere} \label{eq:indicator}
\end{cases}\\
& \triangleq \iota_{\mathcal{S}}(\A).
\end{align}
{
Hereabove, $\mathcal{S} \subset \mathbb{R}^{N_x \times N_x}$ is related to the stability condition on the SSM:
\begin{equation}
\mathcal{S} = \{ \A \in \mathbb{R}^{N_x \times N_x} | \| \A \|_2 \leq \delta < 1\}, \label{eq:setS}
\end{equation}
for $\delta \in (0,1)$ (typically close to one)}. The proximity operator of \eqref{eq:indicator} is simply the projection onto $\mathcal{S}$. Such projection has a closed form \cite{bauschke2017convex}, that we explicit in Table~\ref{tab:constraint}. We also provide in Table~\ref{tab:constraint} two other meaningful examples for $\mathcal{S}$, along with the expression for the associated projection. {In particular, the range constraint can be used to impose the sign of some entries of $\A$, i.e., to impose the arrows direction in the estimated graph (under our interpretation).}

We now continue our discussion by presenting another family of possible penalty terms in $\loss_0$. For each presented example for $f$, we provide the expression for $\operatorname{prox}_{\vartheta f}$ where $\vartheta$ is a positive scaling parameter. This is the aim of Table~\ref{tab:prior}. We focus on two particular choices presented in the table, namely the Laplace and block-Laplace priors. Both choices enhance sparsity of matrix $\A$, with the latter being a generalization of the former. The introduction of sparsity promoting prior is in general desirable when doing parameter identification, and is key under our novel approach. As any regularization, it aims at reducing over-fitting problems that could arise for low values of $K$ and thus increases the generalization capacity of the model. Even more, it promotes matrices $\A$ with few non-zero entries, which highly helps for the interpretability of the resulting SSM. Each non-zero entry can actually be understood as a statistical dependence (correlation in this case), between two state dimensions in two consecutive time steps. One can thus interpret $\A$ as the adjacency matrix of a directed graph (since the entries of $\A$ are signed) mapping the entries of the hidden state vector from time $k-1$ to those of time $k$. The GraphEM approach proposed in this work aims at recovering this graph, and if possible, promoting an interpretable structure. This is done by incorporating a prior of sparsity on $\A$, thanks to appropriate choice for $f$. An immediate idea would be to define $f$ as the $\ell_0$ norm of $\A$, that counts the number of non-zero entries of the matrix. However, this function is non-convex, non continuous, and it is associated to a improper law $p(\A)$, which is undesirable. Instead, one prefers to choose for $p(\A)$, the proper, log-concave Laplace distribution, leading to the so-called Lasso regularization \cite{BachSparse} reading $f(\A) = \kappa \ell_1(\A)$ with $\kappa>0$ a regularization weight. The larger $\kappa$, the stronger sparsity of $\A$, with the extreme case of a null $\A$ for sufficiently large $\kappa$. The $\ell_1$ norm has been used in numerous works of signal processing and machine learning~\cite{Tibshirani,Chaux_2007}, including graph signal processing \cite{Friedman07,Benfenati18}. It has a simple closed form proximity operator, namely the soft thresholding operator \cite{Daubechies}, that we recall in Table~\ref{tab:prior}. 

In certain scenarios, one might have some prior knowledge about some structured sparsity in $\A$. Otherwise stated, one might want to cancel (or not) some blocks of $\A$ in a simultaneous manner, because the entries of these blocks are connected. For instance, they could correspond to real/imaginary part of the same complex quantity (see example in the experimental section). This paves the way for using a more sophisticated prior, where the Laplace distribution is now promoted on each block of $\A$. More formally, let $B \geq 1$ a divisor of $N_x^2$, defining the number of blocks. Each $\A \in \mathbb{R}^{N_x \times N_x}$ can be rewritten equivalently as a set of $B$ vectors $(\mathbf{a}(b))_{1 \leq b \leq B}$ of size $N_x^2/B$. The block-Laplace prior amounts to computing the $\ell_2$ norms of each of these vectors, and then summing the $B$ resulting values, to obtain $f(\A)$. This can also be rewritten $f(\A) = \kappa \ell_{2,1}(\A)$, by using the mixed norm notation from \cite{KOWALSKI2009303}, and introducing the regularization weight $\kappa>0$. Mixed norms have been widely used in machine learning  under various combinations \cite{BachSparse}. For our particular choice, the proximity operator remains simple, and is provided in Table~\ref{tab:prior}. It is worth noticing that both proximity operators for Laplace and block-Laplace involve a threshold of the entries of their input. We list two other examples of priors in the table, the former being the result of a Gaussian prior distribution, while the latter combines Laplace and Gaussian, and is also known as \emph{Lasso with elastic-net} \cite{zou2005regularization}. 
  
\begin{table*}
\caption{Example of priors, expressions for the resulting regularization and its proximity operator with scale parameter $\vartheta>0$.}
\centering
\renewcommand{\arraystretch}{2}
\begin{tabular}{|c|c|c|}
\hline
Prior & $f(\A)$ & $\operatorname{prox}_{\vartheta f}(\A)$\\
\hline
\hline
Laplace & $ \|\A \|_1 = \sum_{n = 1}^{Nx} \sum_{\ell=1}^{N_x} |A(n,\ell) |$ & $\left(\text{sign}(A(n,\ell) ) \times\max(0,|A(n,\ell) | - \vartheta)
    \right)_{1 \leq n,\ell\leq N_x}$\\
\hline
Block-Laplace & $ \|\A \|_{2,1} = \sum_{b=1}^{B} \|\mathbf{a}(b)\|_2 $ & $\left( (1 - \frac{\vartheta}{\max(\|\mathbf{a}(b)\|_2,\vartheta)}) \, \mathbf{a}(b)
    \right)_{1 \leq b \leq B}$ \\
\hline
Gaussian & $\frac{1}{2} \|\A \|_F^2 = \frac{1}{2}\sum_{n = 1}^{Nx} \sum_{\ell=1}^{N_x} (A(n,\ell) )^2$ & $\left(\frac{A(n,\ell) }{1 + \vartheta}
    \right)_{1 \leq n,\ell \leq N_x}$ \\
\hline
Laplace + Gaussian & $ \|\A \|_1 + \frac{1}{2} \| \A \|_F^2$ & $\left(\text{sign}\left(\frac{A(n,\ell) }{1 + \vartheta}\right) \times\max\left(0, \left|\frac{A(n,\ell) }{1 + \vartheta} \right| - \frac{\vartheta}{1 + \vartheta}\right)
    \right)_{1 \leq n,\ell \leq N_x}$\\
\hline
\end{tabular}
\label{tab:prior}
\end{table*}

\begin{table*}
\caption{Example of convex constrained sets and associated projection operators. $\delta>0$ and $a_{\min}\leq a_{\max}$ are hyper-parameters. We use the singular value decomposition $\A = \mathbf{U}^\top \text{Diag}(\mathbf{s}) \mathbf{V}$.}
\centering
\renewcommand{\arraystretch}{2}
\begin{tabular}{|c|c|c|}
\hline
Constraint & $\mathcal{S}$ & $\mathrm{Proj}_{\mathcal{S}}(\A)$\\
\hline
\hline
Bounded spectrum & $ \|\A\|_2 \leq \delta$ & $\mathbf{U}^\top \text{Diag} \left(\left(\text{sign}(s_n)\min(|s_n|,\delta)  
    \right)_{1 \leq n \leq N_x}\right) \mathbf{V}$\\
\hline
Range & $(\forall (n,\ell) \in \{1,\ldots,N_x\}^2) \; A(n,\ell) \in [a_{\min},a_{\max}]$ &  $\left( \min(\max(a_{\min},A(n,\ell) ),a_{\max}) 
    \right)_{1 \leq n,\ell \leq N_x}$\\
 \hline
Bounded energy & $\|\A\|_F \leq \delta$ & $\left( (1 - \frac{\delta}{\max(\|\A\|_F,\delta)}) \, A(n,\ell) 
    \right)_{1 \leq n,\ell \leq N_x}$\\
\hline
\end{tabular}
\label{tab:constraint}
\end{table*}

\subsection{Convergence result}
\label{sec:conver}

\cblue{We now show the convergence of GraphEM as in  Algorithm~\ref{algo:GRAPHEM}. We refer to \cite{bauschke2017convex} for definitions of functional analysis.}
\cblue{
\begin{Teorema}
\label{ref:graphEM}
Assume that the MAP loss function \eqref{eq:phik} is coercive on $\mathbb{R}^{N_x \times N_x}$ and that the prior term $\mathcal{L}_0$ is proper, convex, and lower semicontinuous on $\mathbb{R}^{N_x \times N_x}$. We furthermore assume that the relative interior of the domain of $\mathcal{L}_0$ contains the level set $\mathcal{E} = \{\Ab \in \mathbb{R}^{N_x \times N_x} | \mathcal{L}_K(\Ab) \leq \mathcal{L}_K(\Ab^{(0)})\}$. If the M-step in GraphEM is solved exactly i.e., for every $i \in \mathbb{N}$,
\begin{equation}
\A^{(i+1)} = \text{argmin}_{\A \in \mathbb{R}^{N_x \times N_x}} \mathcal{Q}(\A;\A^{(i)}), \label{eq:graphEMex}
\end{equation}
with $\Ab^{(0)} \in \mathbb{R}^{N_x \times N_x}$, then the following statements hold:
\begin{itemize}
\item[(i)] The sequence $\left(\mathcal{L}_K(\A^{(i)})\right)_{i \in \mathbb{N}}$ is a decreasing sequence converging to a finite limite $\mathcal{L}^*$.
\item[(ii)] The sequence of iterates $\left(\A^{(i)}\right)_{i \in \mathbb{N}}$ has a cluster point (i.e., one can extracts a converging subsequence).
\item[(iii)] Let $\A^*$ a cluster point (i.e., the limit of a converging subsequence) of $\left(\A^{(i)}\right)_{i \in \mathbb{N}}$ . Then, $\mathcal{L}_K(\A^*) = \mathcal{L}^*$, and $\A^*$ is a critical point of $\mathcal{L}_K$, i.e., $\nabla \mathcal{L}_{1:K}(\A^*) \in \partial \mathcal{L}_0(\A^*)$. 
\end{itemize}
\end{Teorema}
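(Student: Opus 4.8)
The plan is to recognize GraphEM as an instance of a majorization-minimization (MM) scheme for the MAP loss $\mathcal{L}_K$, and to leverage the majorization inequality \eqref{eq:EMmaj} together with standard arguments from the convergence theory of EM/MM methods. The three claims decouple cleanly: (i) is a monotonicity argument, (ii) is a compactness argument using coercivity, and (iii) is the delicate fixed-point/subdifferential characterization.

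For (i): by construction $\mathcal{Q}(\A;\A^{(i)}) = -q(\A;\A^{(i)}) + \mathcal{L}_0(\A) + \ctA$ majorizes $\mathcal{L}_K(\A)$ for every $\A$ (this follows from \eqref{eq:EMmaj} after adding $\mathcal{L}_0$ to both sides), with equality at $\A = \A^{(i)}$. Since $\A^{(i+1)}$ minimizes $\mathcal{Q}(\cdot;\A^{(i)})$ by \eqref{eq:graphEMex}, we get the sandwich $\mathcal{L}_K(\A^{(i+1)}) \leq \mathcal{Q}(\A^{(i+1)};\A^{(i)}) \leq \mathcal{Q}(\A^{(i)};\A^{(i)}) = \mathcal{L}_K(\A^{(i)})$. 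Hence $(\mathcal{L}_K(\A^{(i)}))_{i}$ is nonincreasing; since $\mathcal{L}_K$ is bounded below (it is coercive and lower semicontinuous on $\mathbb{R}^{N_x \times N_x}$, being a sum of the smooth $\mathcal{L}_{1:K}$ and the proper lsc convex $\mathcal{L}_0$, hence attains its infimum), the sequence converges to a finite limit $\mathcal{L}^*$. For (ii): all iterates lie in the level set $\mathcal{E} = \{\A \mid \mathcal{L}_K(\A) \leq \mathcal{L}_K(\A^{(0)})\}$ by (i); coercivity of $\mathcal{L}_K$ makes $\mathcal{E}$ bounded, and lower semicontinuity makes it closed, hence compact. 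So $(\A^{(i)})_i$ admits a convergent subsequence.

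For (iii): let $\A^{(i_j)} \to \A^*$. Continuity of $\mathcal{L}_K$ on its domain (again using lsc plus the monotone bounded sequence, or directly noting $\mathcal{L}_{1:K}$ is continuous and $\mathcal{L}_0$ is lsc with $\A^* \in \mathcal{E} \subset \operatorname{ri}\operatorname{dom}\mathcal{L}_0$ where it is continuous) gives $\mathcal{L}_K(\A^*) = \lim_j \mathcal{L}_K(\A^{(i_j)}) = \mathcal{L}^*$. For the critical-point property, the key is the optimality condition for the M-step: $\A^{(i+1)}$ minimizes $\mathcal{Q}(\cdot;\A^{(i)}) = -q(\cdot;\A^{(i)}) + \mathcal{L}_0(\cdot)$, a sum of a smooth convex quadratic (recall $f_1$ is convex quadratic, so $-q(\cdot;\A^{(i)})$ is convex and differentiable up to a constant) and the proper lsc convex $\mathcal{L}_0$; by Fermat's rule together with the sum rule for subdifferentials (valid here because the qualification condition $\operatorname{ri}\operatorname{dom}\mathcal{L}_0 \neq \emptyset$ intersects the domain of the smooth term, which is everything), $0 \in \nabla_\A(-q)(\A^{(i+1)};\A^{(i)}) + \partial \mathcal{L}_0(\A^{(i+1)})$. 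The plan is then to show that $\nabla_\A(-q)(\A;\A^{(i)})$, evaluated along the diagonal $\A = \A^{(i)}$, equals $\nabla \mathcal{L}_{1:K}(\A^{(i)})$ — this is the standard EM identity $\nabla_\A q(\A;\A)|_{\text{diag}} = -\nabla \mathcal{L}_{1:K}$, coming from the fact that $h(\A;\A^{(i)})$ in \eqref{eq:Eproof} has vanishing gradient in its first argument at $\A=\A^{(i)}$ (Gibbs' inequality makes $\A^{(i)}$ a minimizer of $\A \mapsto h(\A;\A^{(i)})$, and by differentiability under the integral the gradient there is zero). Combining: $\nabla \mathcal{L}_{1:K}(\A^{(i+1)}) \in \partial\mathcal{L}_0(\A^{(i+1)})$ plus a vanishing error term controlled by $\|\A^{(i+1)} - \A^{(i)}\|$. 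To pass to the limit we need $\|\A^{(i_j+1)} - \A^{(i_j)}\| \to 0$; this follows from a standard telescoping argument — the decrease $\mathcal{L}_K(\A^{(i)}) - \mathcal{L}_K(\A^{(i+1)})$ is bounded below by the strong-convexity-type gap induced by the quadratic term $\frac{1}{2}\|\A^{(i+1)}-\A^{(i)}\|^2$ hidden in the majorization (using that $\mathcal{Q}(\A^{(i)};\A^{(i)}) - \mathcal{Q}(\A^{(i+1)};\A^{(i)}) \geq$ a multiple of $\|\A^{(i+1)}-\A^{(i)}\|^2$ when $f_1$'s Hessian is positive definite, or more carefully via the $\Phib$-weighted seminorm), and summability of the decreases forces the increments to zero. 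Finally, closedness of the graph of $\partial\mathcal{L}_0$ (true for proper lsc convex functions) and continuity of $\nabla\mathcal{L}_{1:K}$ let us pass to the limit along the subsequence and conclude $\nabla\mathcal{L}_{1:K}(\A^*) \in \partial\mathcal{L}_0(\A^*)$.

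The main obstacle is step (iii), specifically establishing $\|\A^{(i+1)} - \A^{(i)}\| \to 0$ and handling the subdifferential limit rigorously: if $\Phib$ (equivalently the Hessian of $f_1$) is merely positive semidefinite rather than definite, the quadratic term only controls a seminorm of the increment, and one must supplement it with coercivity/uniqueness arguments or an Opial-type / Kurdyka–Łojasiewicz argument to pin down the increments — this is where care with the assumptions on $\Q$, $\Phib$, and the regularizer's domain geometry really matters. A secondary subtlety is justifying differentiation under the integral sign for $q(\cdot;\A^{(i)})$ and the EM gradient identity, but given the explicit closed form \eqref{eq:g} this is a routine computation that makes $\nabla_\A q$ an affine function of $\A$, so it can be dispatched quickly.
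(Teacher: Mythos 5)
Your plan is correct in outline, but it follows a genuinely different route from the paper. The paper does not re-derive the MM convergence argument at all: it verifies the hypotheses of an existing result, \cite[Th.~5]{Benfenati18}, for EM-type schemes minimizing a smooth likelihood plus a proper convex lower-semicontinuous penalty. Concretely, coercivity of $\mathcal{L}_K$ makes each surrogate $\mathcal{Q}(\cdot;\A^{(i)})$ coercive and lower semicontinuous, so the exact M-step \eqref{eq:graphEMex} is well posed; continuous differentiability of $\mathcal{L}_{1:K}$ is imported from the sensitivity-equation analysis of \cite{Gupta1974}; and the subdifferential manipulations are delegated to \cite[Cor.~16.48(ii)]{bauschke2017convex}. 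Items (i)--(iii) are then read off from that theorem. Your direct argument --- sandwich monotonicity, compactness of the level set, Fermat's rule at the M-step combined with the tangency (Fisher) identity, increments tending to zero, and a closed-graph passage to the limit --- is essentially the machinery underlying such results, so it buys self-containedness at the cost of redoing work the paper outsources.

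On the obstacle you flag in (iii): it closes more easily than you suggest, and no Opial or Kurdyka--\L{}ojasiewicz argument is needed. The Hessian of $f_1$ in \eqref{eq_f1} acts on $\A$ as $\A \mapsto \Q^{-1}\A\Phib$, and $\Phib$ in \eqref{eq:Phi} is a sum of smoothing covariances $\covSmooth_{k-1}$ (SPD because $\Q$, $\covPrior_0$ and the $\R_k$ are SPD) plus rank-one PSD terms, hence SPD at every iteration; since $(\Psib,\Deltab,\Phib)$ depend continuously on the current iterate through the Kalman/RTS recursions (well defined because $\Q$ is SPD) and the iterates remain in the compact level set $\mathcal{E}$, the smallest eigenvalue of $\Phib$ is bounded away from zero uniformly along the sequence. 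Each surrogate is therefore uniformly strongly convex, and your telescoping bound does give $\|\A^{(i+1)}-\A^{(i)}\|_F \to 0$ for the whole sequence. Two smaller remarks: the tangency identity $\nabla_{\A}\, q(\A^{(i)};\A^{(i)}) = -\nabla \mathcal{L}_{1:K}(\A^{(i)})$ follows without differentiating under the integral, simply because the smooth function $\mathcal{L}_{1:K} + q(\cdot;\A^{(i)})$ attains its maximum at $\A^{(i)}$ by \eqref{eq:EMmaj}; and the inclusion your computation actually produces in the limit is $-\nabla\mathcal{L}_{1:K}(\A^*) \in \partial\mathcal{L}_0(\A^*)$, i.e.\ $0 \in \nabla\mathcal{L}_{1:K}(\A^*) + \partial\mathcal{L}_0(\A^*)$, which is the intended criticality condition for \eqref{eq:phik}.
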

\begin{proof} 
Our proof consists in showing that the conditions of \cite[Th.~5]{Benfenati18} are met. First, let us remark that the GraphEM exact formulation \eqref{eq:graphEMex} is well defined, since for every $\Ab$, $\mathcal{Q}(\A;\A^{(i)})$ is a coercive lower-semicontinuous function. It indeed majorizes $\mathcal{L}_K$ which is coercive by assumption.  Moreover, according to \cite{Gupta1974}, the likelihood function $\mathcal{L}_{1:K}$ is continuously differentiable on $\mathbb{R}^{N_x \times N_x}$. In particular, it is continuously differentiable on the level set $\mathcal{E}$. The rest of the proof follows using the same arguments as in \cite[Th.~5]{Benfenati18}, and the subdifferential calculation from \cite[Corollary 16.48(ii)]{bauschke2017convex}.
\end{proof}
}

\cblue{First, it should be noted that this result focuses on the exact form of Alg.~\ref{algo:GRAPHEM}, when the M-step is assumed to be solved exactly. Extending Theorem~\ref{ref:graphEM}(i)-(ii) to the case of an inexact resolution of the M-step would be straightforward, but it is not the case for Theorem~\ref{ref:graphEM}(iii). According to our Proposition~\ref{ref:theoMS}, the sequence produced by our M-step inner solver in Alg. \ref{algo:MS} converges to an optimal value. In practice, we did not observe any numerical instabilities of the algorithm as soon as a sufficient precision was imposed in the M-step resolution. Second, let us notice that our assumptions on the penalty term $\mathcal{L}_0$ are compliant with those made in Section~\ref{sec_m_step} and with the examples discussed in Section~\ref{sec_prior}. The case of a null penalty (i.e., $\mathcal{L}_0 \equiv 0$) is also covered by our assumptions. In such case, Alg.~\ref{algo:GRAPHEM} becomes equivalent to the MLEM algorithm from \cite{shumway1982approach}. Up to our knowledge, our convergence result is new even for this simple setting. Finally, we must emphasize that, due to the intricate form of the likelihood function and of the presence of a possibly non-differentiable penalty term, it appeared not possible to apply the standard convergence analysis for EM methods from \cite{Wu}.}

\section{\cblue{Numerical simulations}}
\label{sec:experiment}

\subsection{Synthetic data}
\label{sec:synthetic}

We start our experimental section by illustrating the performance of GraphEM in a controlled scenario involving synthetic data. Time series $\{\y_k,\x_k\}_{k=1}^K$ are simulated using  \eqref{eq_model_state}-\eqref{eq_model_obs}, with settings $K = 10^3$, \cred{$\Q = \sigma_\Q^2 {\Id}_{N_x}$}, $\R_k = \sigma_\R^2 {\Id}_{N_y}$ for every $k \in \{1,\ldots,K\}$, $\P_0 = \sigma_\P^2 {\Id}_{N_x}$ with $(\sigma_\Q,\sigma_\R,\sigma_\P)$ some predefined values. We consider the scenario where $\H_k = {\Id}_{N_x}$ for every $k \in \{1,\ldots,K\}$, that is {there is a one-to-one correspondence between states and observations, and thus $N_x = N_y$}. This choice presents the advantage of avoiding any identifiability issues that may arise from an ill-conditioned observation matrix, and thus to fully focus on the graph inference problem, i.e., the estimation of matrix $\A$. Since we are dealing with synthetic data, the ground truth matrix $\A$ can be predefined. In our experiments, we rely on block-diagonal matrices $\A$, made of $J$ blocks with dimensions $\{B_j\}_{1 \leq j \leq J}$, so that $N_y = \sum_{j=1}^J B_j$. 
The diagonal blocks of $\A$ are randomly set as matrices of auto-regressive processes of order one, AR(1), satisfying the stability assumption (i.e., spectral norm less than one). 
This procedure leads to the construction of four datasets summarized in Table~\ref{tab:data}. Having the groundtruth available allows us to rely on quality assessment metrics of the estimated $\A$. Here, we retain the relative mean square error (RMSE) \cblue{in the estimation of the transition matrix}, as well as the precision, recall, specificity, accuracy, and F1 score for detecting the non-zero entries of $\A$ (that is, the graph edges positions). A threshold value of $10^{-10}$ on the absolute entries of matrix $\A$ is used for the detection hypothesis.

\begin{table*}[t]
\caption{Results for GraphEM, StableEM, OracleEM, MLEM, PGC and CGC, along with averaged computing times.}
\centering
{\footnotesize
\cblue{
\begin{tabular}{|c|c||c|c|c|c|c|c||c|}
\cline{2-9}
\multicolumn{1}{c|}{ } & method & RMSE & accur. & prec. & recall & spec. & F1 & Time (s.)\\
\hline
\multirow{4}{*}{A} & GraphEM & $\bf{0.081789}$ & $\bf{0.90988}$ & $0.999$ & $0.73037$ & $0.99963$ & $\bf{0.84361}$ & $2.3063$\\
  & StableEM & $0.1405$ & $0.3333$ & $0.3333$ & $\bf{1}$ & $0$ & $0.5$& $2.3506$\\
  & MLEM & $0.148$ & $0.3333$ & $0.3333$ & $\bf{1}$ & $0$ & $0.5$& $1.6059$\\
  & PGC& - & $0.8765$ & $0.9474$ & $0.6667$ & $0.9815$& $0.7826$& $0.1312$\\ 
  & CGC & - & $0.8765$ & $\bf{1}$ & $0.6293$ & $\bf{1}$& $0.7727$& $0.1366$\\ 
	\cline{2-9}
	  & OracleEM & $0.0879$ & $1$ & $1$ & $1$ & $1$ & $1$& $0.9572$\\
\hline
\hline
\multirow{4}{*}{B} & GraphEM & $\bf{0.080687}$ & $\bf{0.90691}$ & $\bf{1}$ & $0.72074$ & $\bf{1}$ & $\bf{0.83753}$& $2.1448$\\
  & StableEM & $0.15042$ & $0.3333$ & $0.3333$ & $\bf{1}$ & $0$ & $0.5$& $3.0263$\\
& MLEM & $0.15203$ & $0.3333$ & $0.3333$ & $\bf{1}$ & $0$ & $0.5$& $1.5291$\\
  & PGC & - & $0.8889$ & $\bf{1}$ & $0.6667$ & $\bf{1}$& $0.8$& $0.0606$\\ 
  & CGC & - &$0.8889$ & $\bf{1}$ & $0.6667$ & $\bf{1}$& $0.8$& $0.0631$\\ 
	\cline{2-9}
	  & OracleEM & $0.076122$ & $1$ & $1 $ & $1 $ & $1 $ & $ 1$& $1.1179$\\
\hline
\hline
\multirow{4}{*}{C} & GraphEM & $\bf{0.12624}$ & $\bf{0.91695}$ & $0.97392$ & $0.70676$ & $0.99298$ & $\bf{0.81878}$& $5.0027$\\
  & StableEM & $0.23253$ & $0.2656$ & $0.2656$ & $\bf{1}$ & $0$ & $0.4198$& $5.6791$\\
& MLEM & $0.2448$ & $0.2656$ & $0.2656$ & $\bf{1}$ & $0$ & $0.4198$& $5.6557$\\
  & PGC& - & $0.9023$ & $\bf{0.9778}$ & $0.6471$ & $\bf{0.9949}$& $0.7788$& $0.4095$\\ 
  & CGC& - & $0.8555$ & $0.9697$ & $0.4706$ & $\bf{0.9949}$& $0.6337$& $0.4175$\\ 
	\cline{2-9}
	  & OracleEM & $0.1214$ & $1$ & $1 $ & $1 $ & $1 $ & $ 1$& $2.5504$\\
\hline
\hline
\multirow{4}{*}{D} & GraphEM& $\bf{0.12347}$ & $\bf{0.91648}$ & $\bf{0.98866}$ & $0.69382$ & $\bf{0.99702}$ & $\bf{0.81514}$& $2.9988$\\
  & StableEM & $0.22897$ & $0.2656$ & $0.2656$ & $\bf{1}$ & $0$ & $0.4198$& $4.9561$ \\
  & MLEM& $0.2416$ & $0.2656$ & $0.2656$ & $\bf{1}$ & $0$ & $0.4198$& $2.5501$\\
  & PGC & - & $0.8906$ & $0.9$ & $0.6618$ & $0.9734$& $0.7627$& $0.2881$\\ 
    & CGC& - & $0.8477$ & $0.9394$ & $0.4559$ & $0.9894$& $0.6139$& $0.2948$\\ 
	\cline{2-9}
		  & OracleEM & $0.11925$ & $1$ & $1 $ & $1 $ & $1 $ & $ 1$& $2.2367$\\
\hline
\end{tabular}
}
}
\label{tab:results}
\end{table*}

For each dataset, we ran GraphEM algorithm using a stable AR(1) matrix as initial estimate, and precision parameters $(\varepsilon,\xi) = (10^{-3},10^{-4})$. The regularization $\loss_0 = f_2 + f_3$ with $f_2 = \iota_{\mathcal{S}}$ indicator function of the stable matrix set \eqref{eq:setS} with $\delta = 0.99$, and $f_3 = \kappa \ell_1$ with weight parameter $\kappa>0$. Such choice satisfies the required assumptions for the convergence of the MS algorithm for the M-step. Parameter $\kappa$ is set empirically through a rough grid search maximizing the accuracy score. As for comparison, we also provide the results obtained when (i) no regularization is employed, thus leading to the ML estimator (MLEM), (ii) MLEM is modified so as to account for an oracle knowledge of the position of zero entries of $\A$ (OracleEM), and (iii) only stability constraint is imposed (StableEM). In each of these cases, a similar EM-based procedure than GraphEM is used, with simplified computations for the M-step. The results from OracleEM are separated from the others, as it requires the ground truth knowledge of the graph support, not available in practical situations. In addition to these EM-based methods, we provide comparisons with two Granger-causality approaches \cite{Bressler2011} for graphical modeling, namely pairwise Granger Causality (PGC) and conditional Granger Causality (CGC). Both methods provide a binary information about the identification (or not) of an edge in the graph, by relying on conditional dependency analysis. PGC explores the $N_x(N_x -1)$ possible dependencies among two nodes, at each time independently from the rest. CGC additionally accounts, for each pair of nodes, for the information of the other $N_x-2$ signals, in order to evaluate whether one node brings information to the other while the rest of signals are observed.
As PGC and CGC do not provide a weighted graph estimation, no RMSE score is computed in those case.

\begin{table}[h]
\caption{Description of datasets}
\centering
{\footnotesize
\begin{tabular}{|c||c|c|}
\hline
Dataset & $(B_j)_{1 \leq j \leq J}$ & $(\sigma_\Q,\sigma_\R,\sigma_\P)$\\
\hline\hline
A & $(3,3,3)$ & $(10^{-1},10^{-1},10^{-4})$\\
\hline
B & $(3,3,3)$ & $(1,1,10^{-4})$\\
\hline
C & $(3,5,5,3)$ & $(10^{-1},10^{-1},10^{-4})$\\
\hline
D & $(3,5,5,3)$ & $(1,1,10^{-4})$\\
\hline
\end{tabular}
}
\label{tab:data}
\end{table}
 
The results, averaged on $50$ realizations, are presented in Table~\ref{tab:results}. Nor MLEM neither StableEM promote sparsity in the graph which explains their poor results in terms of edge detection. Still, StableEM presents a slightly better RMSE score, showing the advantage of integrating the stability constraint as a prior during the estimation procedure. GraphEM provides very good RMSE score on all examples. It is remarkable that these scores are comparable, and sometimes even better, than those obtained with OracleEM. This shows that our construction for the regularization function, gathering both sparsity and stability terms, is well suited to reach a high quality estimate for the state matrix. Moreover, the retained $\ell_1$ penalty does not appear here to yield any bias in the estimated graph weights, as it can be sometimes noticed in Lasso regression \cite{Tibshirani}. This can be probably explained by the proposed combination of an $\ell_1$ term and the stability spectral constraint. Regarding the graph structure, we can observe that GraphEM has also better detection scores, when compared with both PGC and CGC. \cblue{We observe that GraphEM is consistently superior in accuracy and F1. These metrics are relevant since they take into account both the true positive and negative connections. Both StableEM and MLEM present a recall metric equal to one, which corresponds to an estimate of the transition matrix without any null entries. An opposite effect is observed by PGC and CGC in specificity, since both methods can over-estimate the amount of zeros (no connections), particularly the latter \cite{Luengo19}. We remind } that OracleEM should not be compared within this metric, since it has access to the edges position and thus has perfect edge detection scores. \cblue{We also provide in the last column of Table~\ref{tab:results}, the averaged computing times over $50$ realizations for each methods, for Matlab 2021a codes running on a 11th Gen Intel(R) Core(TM) i7-1185G7 3.00GHz with 32 Go RAM. PGC and CGC require the lowest computing times. These two methods are based on simpler auto-regressive processes (without latent sates), which explains their poorer performance w.r.t. GraphEM. The other methods share computing times with similar order of magnitude. OracleEM is the fastest method among the EM-based ones, simply because it works in the favorable setting when the edge positions of the graph are assumed to be known, thus reducing the size of the search space. GraphEM is very competitive, compared to its non-regularized counterpart MLEM, thanks to the proposed efficient proximal splitting M-step resolution, while reaching better quantitative results than MLEM by far. Interestingly, for a given dataset size, one can notice a trend of a lower computational times when solving the inference problem for an higher noise level (see dataset A vs B, dataset C vs D), whatever the algorithm employed. This might be related to the peaky likelihood phenomenon described in \cite{DelMoral}, namely the larger noise variance, the easier it is to explore the posterior.
}

We also display an example of graph reconstruction for dataset~C in Fig.~\ref{fig:datasetC}, illustrating the ability of GraphEM to recover the graph structure and signed weights. Finally, we show on Fig.~\ref{fig:datasetAplots} a comparison between MLEM and GraphEM, in terms of evolution of the loss function~\eqref{eq:phik} and the RMSE score, along the iterations of both algorithms. One can notice that both methods reach convergence very fast, in about a dozen of iterations. As EM-based approaches, they both guarantee the decrease of the loss function. Here, we should precise that slight oscillations might be observed for GraphEM as it solves a constrained minimization problem. The projection steps might break the monotonicity of the loss decrease, but this is not jeopardizing the convergence properties of the EM approach, and in practice the loss is rather stable. We notice the different behavior of the RMSE curves for both methods. The non regularized MLEM shows the typical noise amplification effect, decreasing first the RMSE and then increasing it. In contrast, the introduction of a suitable regularization strategy in GraphEM makes it avoid such undesirable phenomenon, and the RMSE evolution follows a stable decrease until converging to its final small value.

\begin{figure}
\begin{tabular}{cc}
\includegraphics[width = 0.2\textwidth]{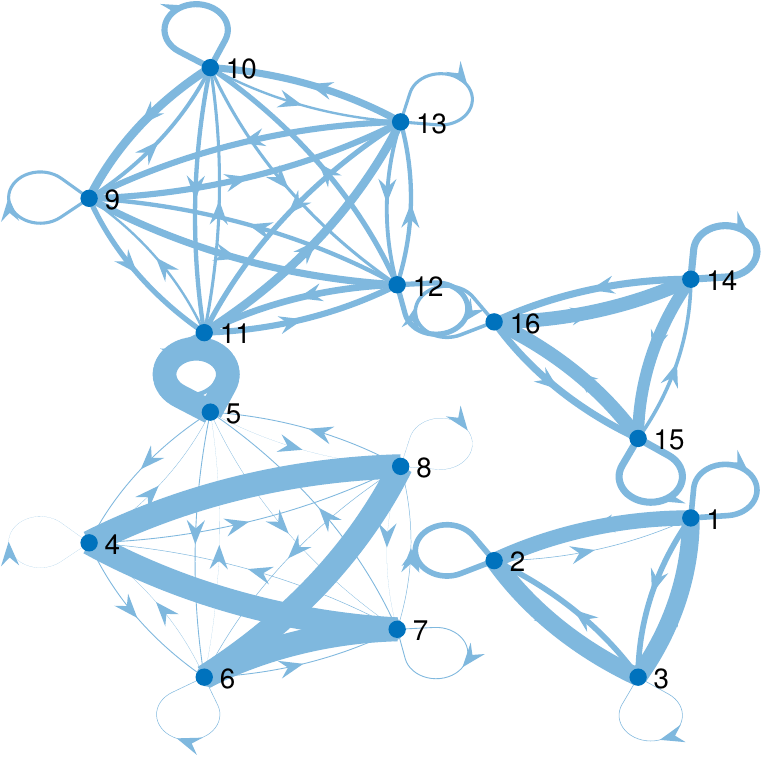} & \includegraphics[width = 0.2\textwidth]{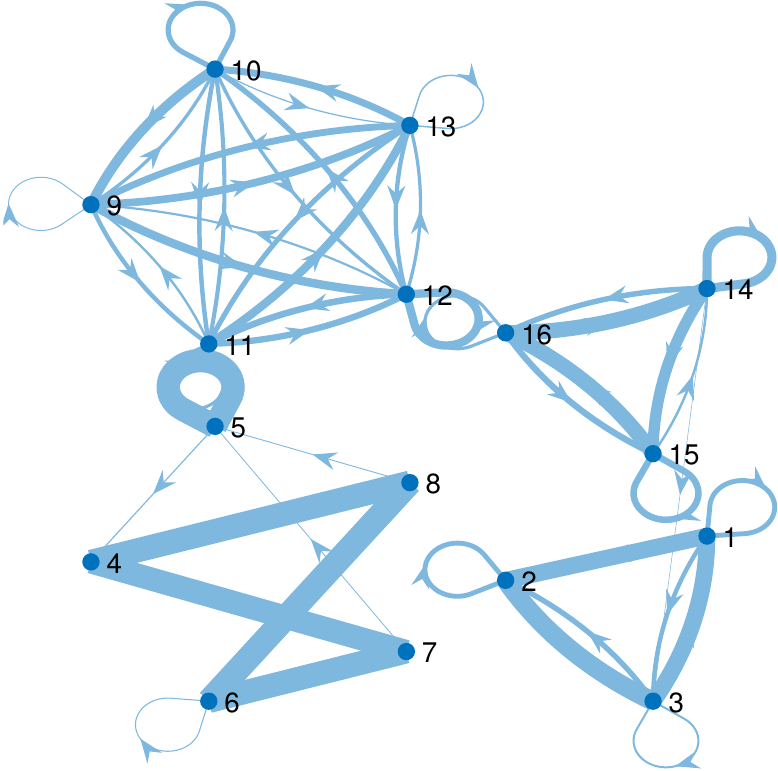} 
\end{tabular}
\vspace*{-0.2cm}
\caption{True graph (left) and GraphEM estimate (right) for dataset C.}
\label{fig:datasetC}
\end{figure}

\begin{figure}
\begin{tabular}{@{}c@{}c@{}}
\includegraphics[width = 0.24\textwidth]{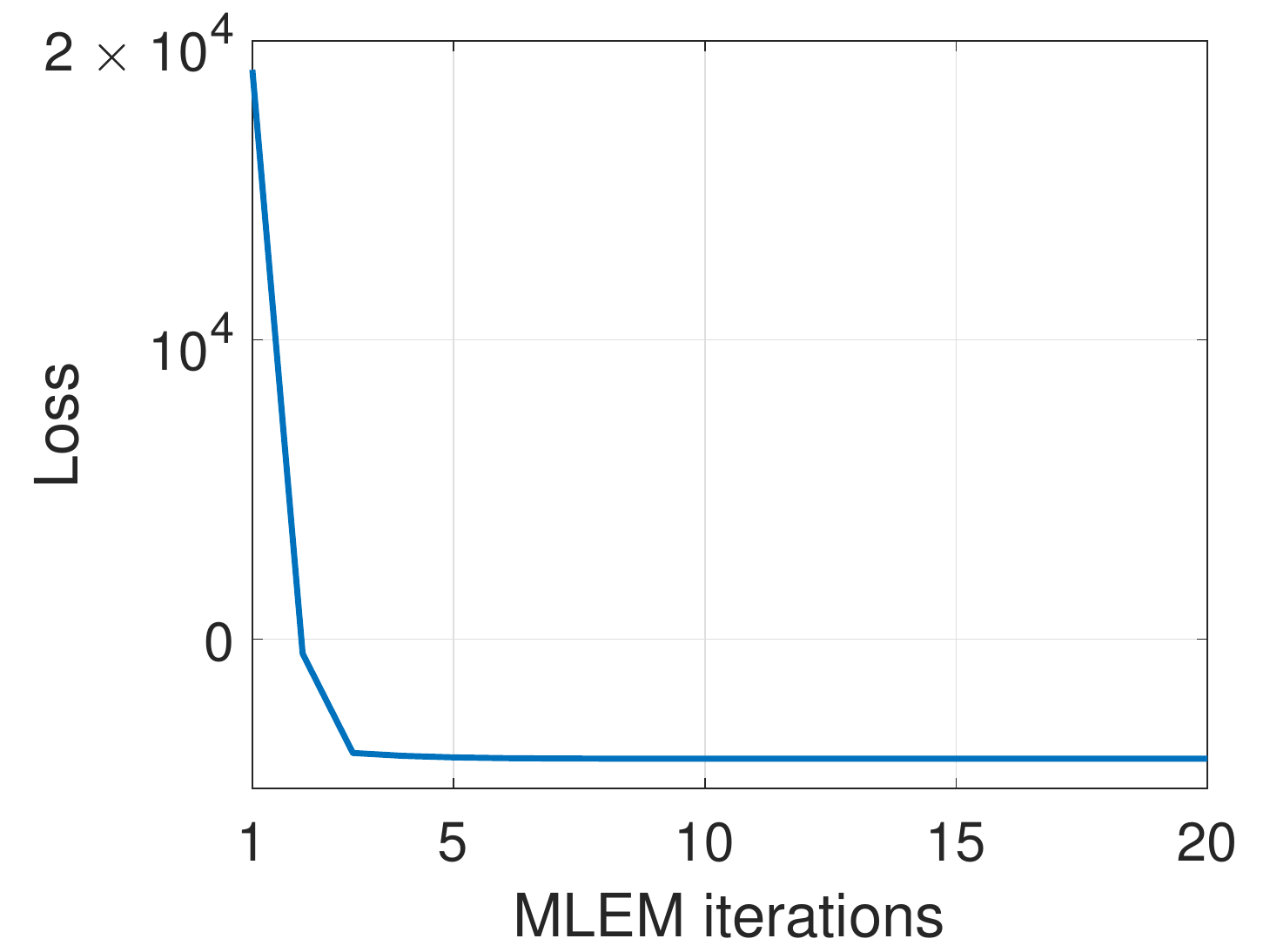} & \includegraphics[width = 0.24\textwidth]{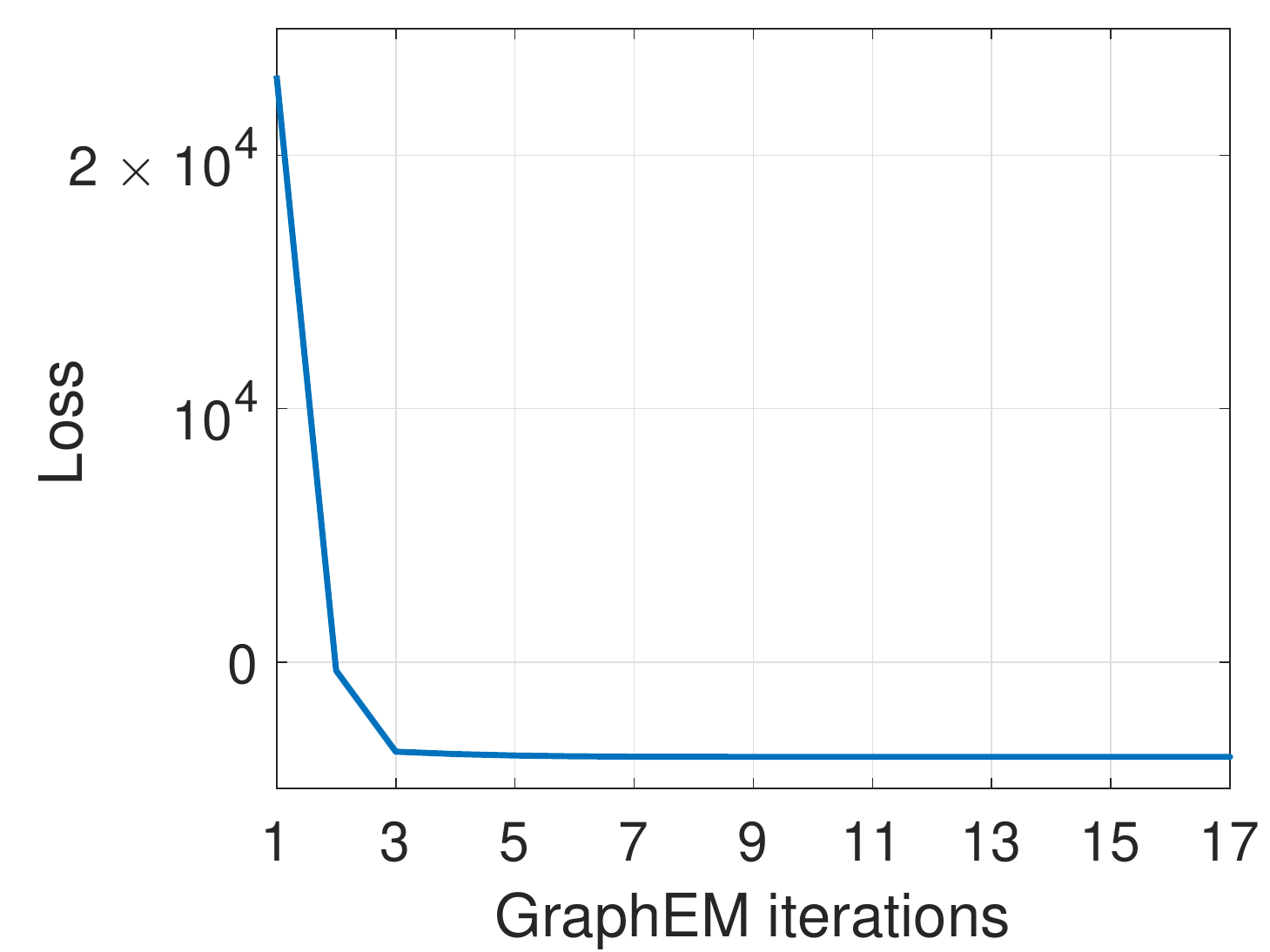}\\
 \includegraphics[width = 0.24\textwidth]{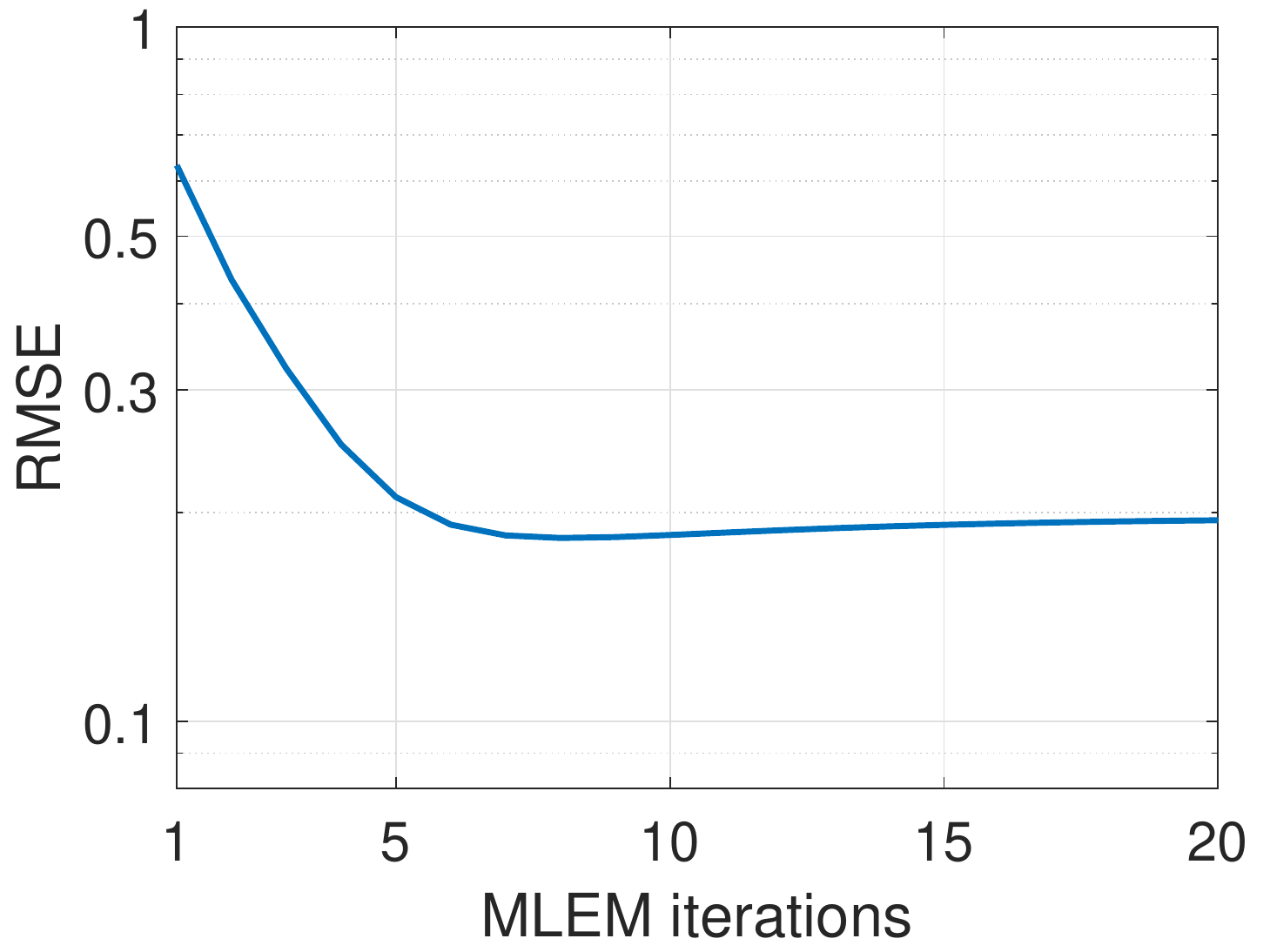} & \includegraphics[width = 0.24\textwidth]{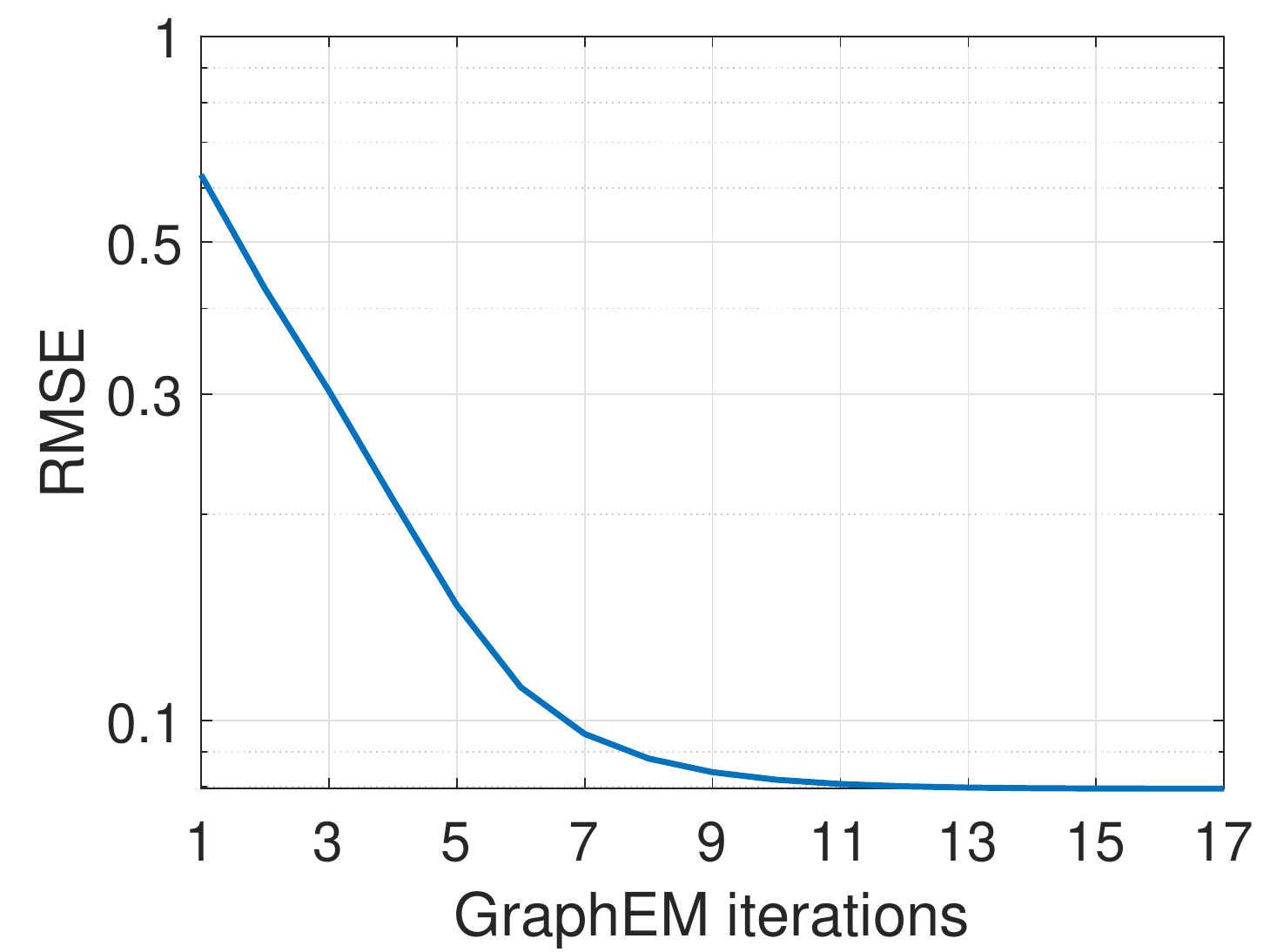}
\end{tabular}
\vspace*{-0.2cm}
\caption{Evolution of loss function (top) and RMSE score (bottom), for MLEM (left) and graphEM (right), when ran on a realization from dataset A.}
\label{fig:datasetAplots}
\end{figure}

\subsection{Wireless channel tracking}
\newcommand{\Nant}{L}
\newcommand{\Npil}{N_{\text{pil}}}
\newcommand{\channel}{\C}
\newcommand{\vect}{\text{vec}}
\newcommand{\pil}{\p}
\newcommand{\obsComp}{\z}
\newcommand{\noiseComplex}{\n}

{We consider a multi-input multi-output (MIMO) wireless communication system with fading \cite{elvira2021multiple}, where the (unknown) channel between the transmitter (TX) and the receiver (RX) must be tracked. The MIMO system is $\Nant \times \Nant$, although a different number of transmit and receive antennas is readily possible. 
At each time step, $\Npil$ $\Nant$-dimensional complex pilots, $\pil_k^{(i)}$, $i=1,\ldots,\Npil$, with $\Npil\geq 1$ and $64$-QAM symbols in each component, are transmitted between TX and RX through the complex channel $\channel_k \in \Complex^{\Nant \times \Nant}$. 
Therefore, the MIMO system with fading is modeled as
\begin{equation}   
 \obsComp_k^{(i)} = \channel_k \pil_k^{(i)} + \noiseComplex_k^{(i)},\label{eq_complex_mimo}
 \end{equation} 
with $k = 1,\ldots,K$ and $i = 1,\ldots,\Npil$, where $\noiseComplex_k^{(i)} \in \Complex^{\Nant}$, is distributed complex-normally with isotropic covariance that yields an $E_b/N_0=38$dBs. In order to express \eqref{eq_complex_mimo} as the (real-valued) observation model in \eqref{eq_model_obs}, we define $\x_k = [\text{Real}(\vect(\channel_k) ) ; \text{Imag}(\vect(\channel_k) )] \in \Real^{2L^2}$ (i.e., $N_x = 2L^2$) as the vectorized version of the complex channel. The real-valued observation vector corresponding to each pilot $i\in \{1,\ldots,\Npil\}$ is defined as $\y_k^{(i)} = [\text{Real}(\obsComp_k ) ; \text{Imag}(\obsComp_k )]$, so that $\y_k = [\y_k^{(1)},\ldots,\y_k^{(\Npil)}]^\top \in \Real^{N_y}$ with $N_y = 2L\Npil$. The observation matrix $\H_k\in \Real^{2L\Npil \times 2L^2}$ is a sparse matrix constructed from the real and imaginary components of all pilots, in such a way that the real-valued observation model in \eqref{eq_model_obs} is equivalent to \eqref{eq_complex_mimo}. The prior pdf of each entry of $\x_0$ is a standard normal distribution. We consider isotropic covariances \cred{$\Q = \sigma_\Q^2 {\Id}_{N_x}$} and $\R_k = \sigma_\R^2 {\Id}_{N_y}$, for every $k \in \{1,\ldots,K\}$, with $\sigma_\Q = 0.2$ and $\sigma_\R = 0.2$. \cblue{}We transmit $\Npil = 4$ at each time step, with $L=4$ transmit and receive antennas, hence $N_x = 32$ and $N_y = 32$.}

{The goal is estimating $\A \in \Real^{32 \times 32}$ 
by introducing sparse constraints motivated by the physical model, in such a way we can then do tracking of the channel with the estimated transition matrix.} We consider two datasets, obtained from two ground truth matrices $\A$. Dataset E relies on the tri-diagonal transition matrix:
\begin{multline}
(\forall (i,j) \in \{1,\ldots,32\}^2) \quad A(i,j) =\\
\begin{cases}
a & \quad \text{if} \quad i = j \; \text{or} \; i = j+16 \, \text{or} \; i + 16 = j,\\
0 &  \quad \text{otherwise},
\end{cases}
\end{multline}
with $a = 0.495$ set so that $\A$ belongs to the stability set \eqref{eq:setS} with $\delta = 0.99$. Dataset F uses the ground truth matrix $\A = \left[\begin{array}{cc} \mathbf{B} & \mathbf{B} \\ \mathbf{B} & \mathbf{B} \end{array}\right]$ with $\mathbf{B} \in \Real^{16 \times 16}$ a block diagonal matrix of 3 blocks with respective dimensions $(4,8,4)$. As in the example in subsection \ref{sec:synthetic}, randomly selected AR(1) matrices belonging to the stability set $\mathcal{S}$, are used to build the blocks of $\mathbf{B}$.

In both cases, observed data are simulated using \eqref{eq_model_obs} with $K = 200$ time steps. We compare the MLEM approach with the GraphEM algorithm, for the estimation of $\A$ from these data. In this example, we aim at exploring the robustness w.r.t. the regularization parameter in GraphEM when imposing both a stability and sparsity constraint, namely $\mathcal{L}_0 = f_2 + f_3$ with $f_2$ set as in our previous example, and $f_3 = \kappa \ell_{2,1}$ with weight parameter $\kappa>0$. The $\ell_{2,1}$ norm, as introduced in subsection \ref{sec_prior}, is a block-sparsity enhancing penalty. We preferred it to the $\ell_1$ norm in that example, as it better accounts for correlations between entries of $\A$ related to the same states in the complex domain. More precisely, following our notations from subsection \ref{sec_prior}, we set $B = L^4$ blocks, so that, for every $b \in \{1,\ldots,B\}$, and every $\Ab \in \mathbb{R}^{2 L^2 \times 2 L^2}$, we consider the $b$-th block of it as
{\small
\begin{multline*}
\mathbf{a}(b) = \\
\left[A({i,j}),A({i + L^2,j}),A({i,j+L^2}),A({i+L^2,j+L^2})\right]^\top \in \mathbb{R}^4,
\end{multline*}
}
with $(i,j) \in \{1,\ldots,L^2\}^2$ the index pair corresponding to the matrix position associated with the lexicographic index $b$. 
 
To that end, $\ell_{2,1}(\A)$ pairs the real and imaginary parts of the state at current and previous time state.  
Similar block-sparsity prior was used in \cite{Gueddari,Florescu} for processing complex-valued images. On both datasets, we run GraphEM algorithm with various weights $\kappa$ selected with the range $(0, 400]$, i.e., in a significantly wide range. We show two performance metrics to evidence the robustness and successful performance of GraphEM. First, in Fig. \ref{fig_wireless} (top), we show the relative mean square error (RMSE) in the estimation of the matrix $\A$ with respect to $\kappa$, either for dataset E (left) and dataset F (right). {We then design a more sophisticated BER analysis where we will track the channel and perform linear detection. Therefore, instead of plugging the true $\A$ to track the channel, we set the matrix estimates corresponding to MLEM or GraphEM, with different values of $\kappa$. More precisely, at each time step, we track the channel $\channel_k$ in the same model described above. The difference is that now we run the Kalman filter setting the estimated $\A$ from each corresponding algorithm. For each  channel use, we transmit $\Npil = 4$ pilots (for tracking purposes) and $500$ (unknown) symbols for evaluating the BER performance under the estimated $\A$ matrix of each algorithm. We decode the transmitted symbols by using the MMSE detector  \cite{jiang2011performance}. We run this testing phase for $10^4$ time steps, for ensuring a sufficiently averaged BER metric \cite{elvira2019multiple}. The BER, as a function of the parameter $\kappa$, is shown in Fig. \ref{fig_wireless} (bottom) for both MLEM and GraphEM approaches running on both dataset E (left) and dataset F (right). In all the four plots, we can see that GraphEM outperforms the MLEM approach, obtaining the best performance for a value around $\kappa =100$. We can also see, that the performance is good for a wide range of $\kappa$ values with an asymmetric behavior: larger values of $\kappa$ still retain the advantage of using GraphEM in this example. This shows the stability of GraphEM model to the setting of~$\kappa$.}

 \begin{figure}
\centering
\begin{tabular}{@{}c@{}c@{}}
\includegraphics[width = 0.24\textwidth]{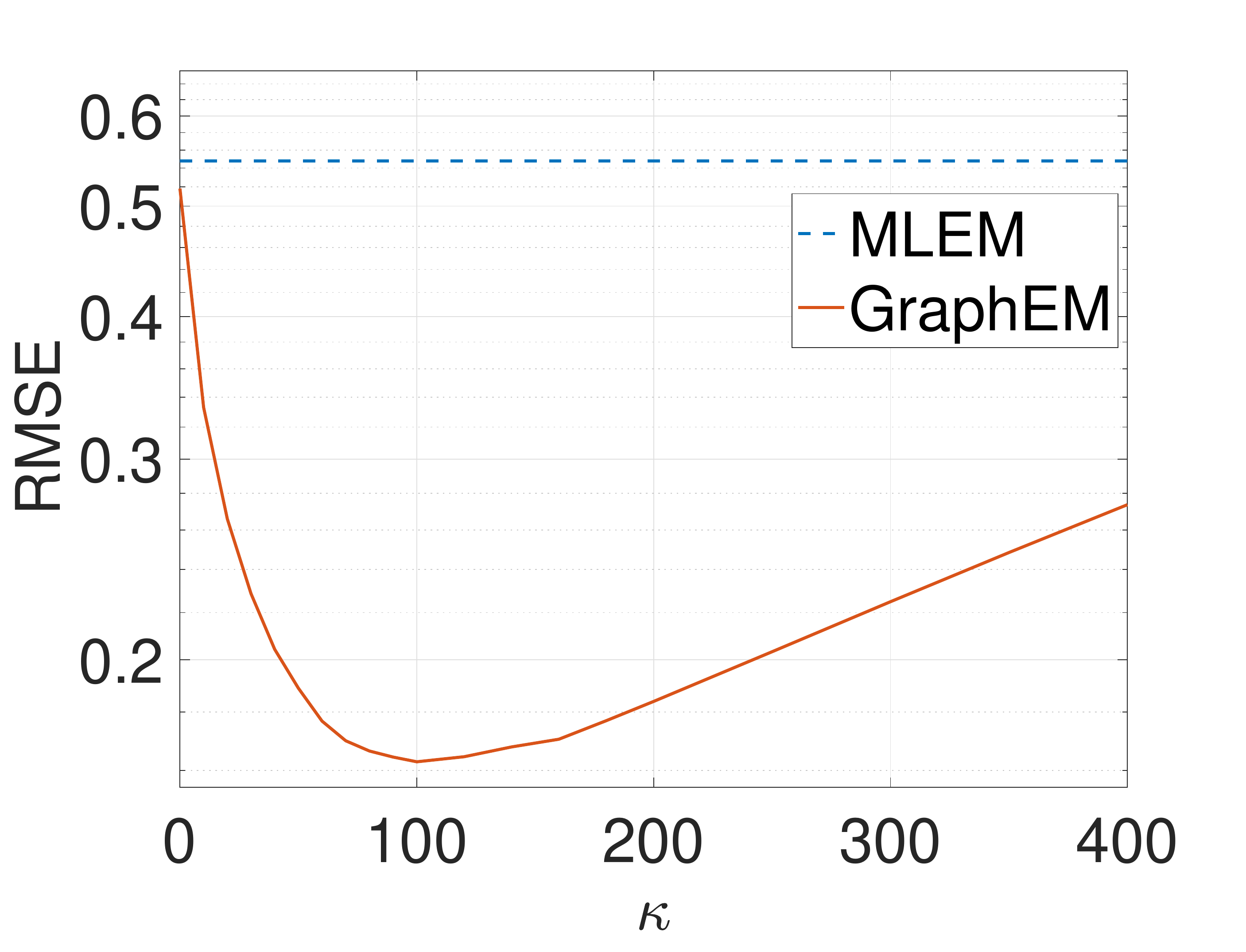}  & \includegraphics[width = 0.23\textwidth]{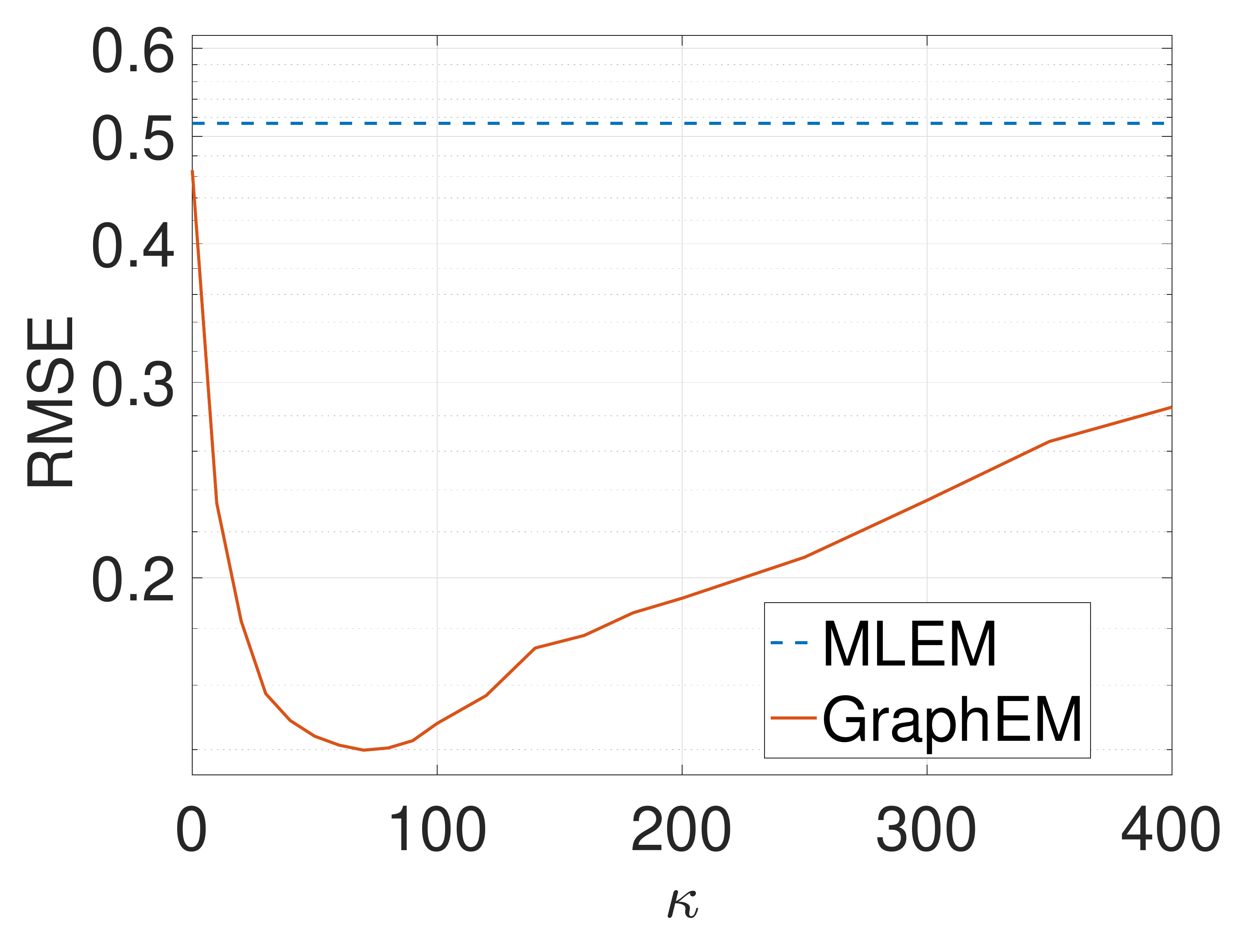} \\
\includegraphics[width = 0.24\textwidth]{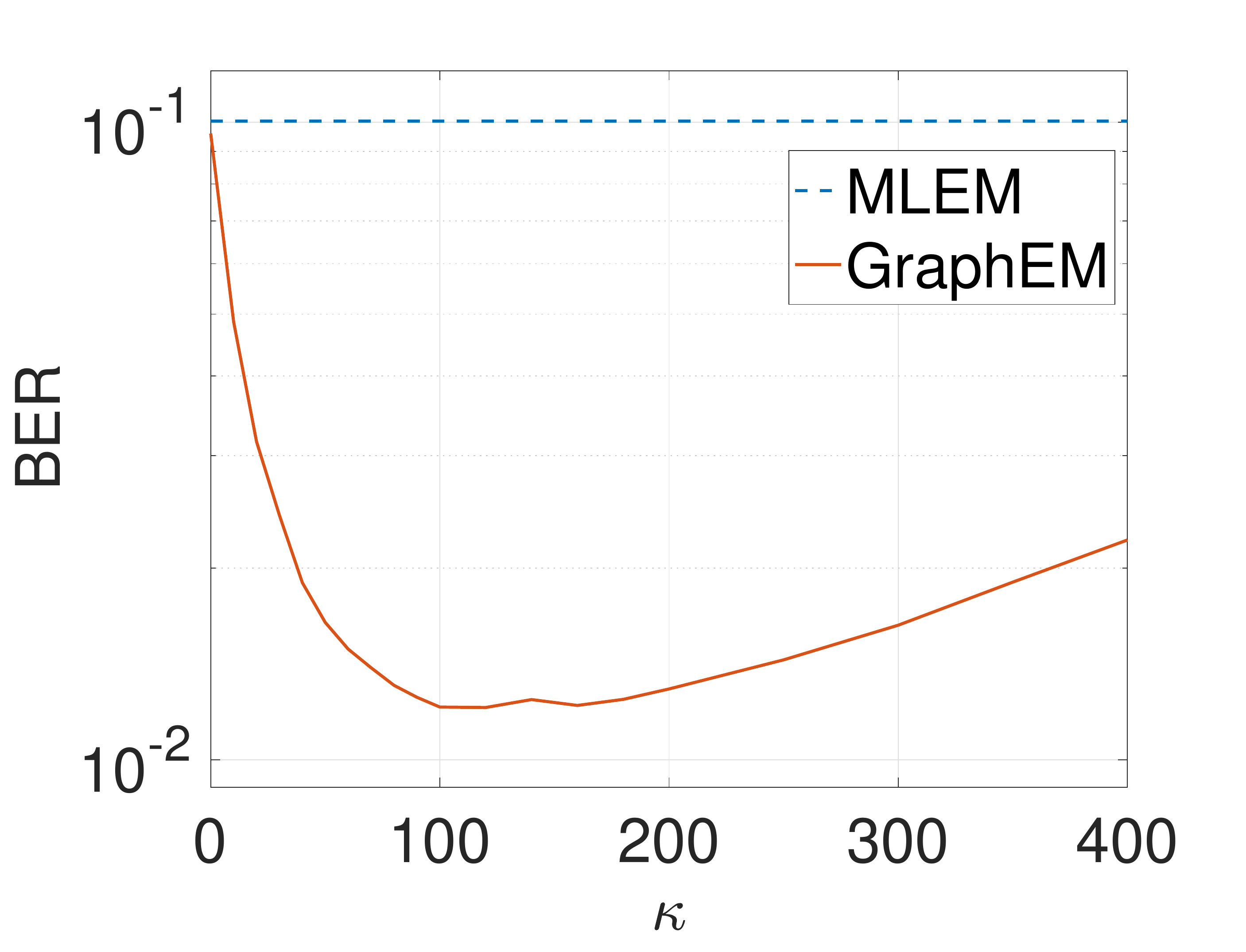}  & \includegraphics[width = 0.23\textwidth]{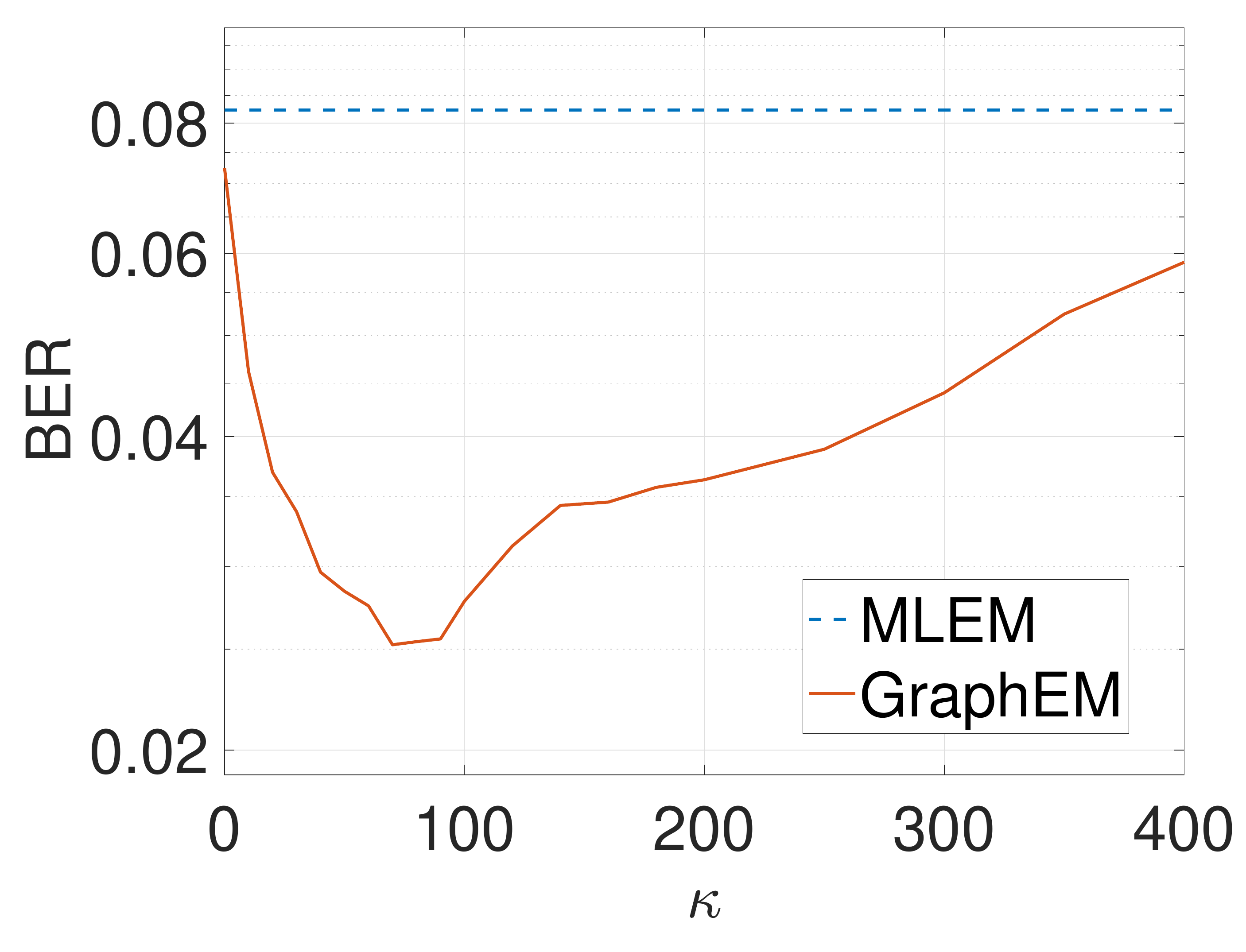} 
\end{tabular}
\caption{\textbf{Wireless channel tracking.} {Performance of GraphEM algorithm in terms of (top) the relative mean square error (RMSE) in the estimation of $\A$ and (bottom) bit error rate (BER) as function of $\kappa$, the weight associated to the $\ell_{2,1}$ norm, on dataset E (left) and F (right).}  
}
\label{fig_wireless}
\end{figure}

\section{Conclusion}
\label{sec:conclusion}
 
%
In this paper, we have proposed a novel methodological framework, called GraphEM, to estimate parameters in the linear-Gaussian state-space model (LG-SSM) by introducing available application-dependent prior knowledge. 
While the methodology is generic to allow for the MAP estimate of all LG-SSM model parameters, we develop further our method for the estimation of the transition matrix. 
Our novel approach interprets this matrix as the adjacency matrix of a directed graph, encoding the Markovian dependencies in the evolution of the multi-variate state. This interpretation has some ties with Granger causality. 
{We then propose GraphEM for estimating this matrix jointly with the inference of the sequence of hidden states.} GraphEM is \cblue{a convergent} expectation-maximization (EM) methodology which incorporates a novel consensus-based implementation of a primal-dual proximal convex optimization solver for the M-step, enabling an efficient incorporation of sophisticated priors on the graph. 
Numerical results illustrate the great performance of the method.
The novel interpretation, the solid theoretical guarantees, and the good performance of GraphEM pave the way for novel advances. For example, we have considered in our numerical examples several penalties on the graph structure, such as stability of the hidden process and block-sparsity enhancing priors that allow for simple and interpretable graphs. The versatility of our method allows to introduce other priors to target specific applications.

{\color{black}

\appendices

\section{Useful expectations involving Normal pdfs}
\label{appendix_i1}
Let us consider a random vector $\mathbf{X}$ in dimension $n \geq 1$, following a multivariate Gaussian distribution with mean $\widetilde{\x}$ and covariance matrix $\widetilde{\P}$:
\begin{equation}
\X \sim \mathcal{N}(\widetilde{\x},\widetilde{\P}),
\end{equation}
with $\widetilde{\x} \in \mathbb{R}^n$, $\widetilde{\P} \in \mathbb{R}^{n \times n}$ symmetric positive definite. We are interested in computing
\begin{equation}
\mathbb{E}\{(\X-\mub)^\top \Sigmab^{-1} (\X-\mub)\},
\end{equation}
for some $\mub \in \mathbb{R}^n$ and $\Sigmab \in \mathbb{R}^{n \times n}$ symmetric positive definite. We have
\footnotesize{
\begin{align}
\mathbb{E}& \{(\X-\mub)^\top \Sigmab^{-1} (\X-\mub)\} \nonumber \\
& = \mathbb{E}\{(\X-\widetilde{\x})^\top \Sigmab^{-1} (\X-\widetilde{\x})\}
+ (\mub-\widetilde{\x})^\top \Sigmab^{-1} (\mub-\widetilde{\x}) \label{eq:trace1}\\
& = \text{tr} \left( \mathbb{E}\{(\X-\widetilde{\x}) (\X-\widetilde{\x})^\top \Sigmab^{-1}\}\right)
+ (\mub-\widetilde{\x})^\top \Sigmab^{-1} (\mub-\widetilde{\x})  \label{eq:trace2}\\
& = \text{tr} \left( \mathbb{E}\{(\X-\widetilde{\x}) (\X-\widetilde{\x})^\top\} \Sigmab^{-1}\right)
+ (\mub-\widetilde{\x})^\top \Sigmab^{-1} (\mub-\widetilde{\x})\\
& =  \text{tr} (\widetilde{\P} \Sigmab^{-1}) + \text{tr}( 
(\mub-\widetilde{\x})^\top \Sigmab^{-1} (\mub-\widetilde{\x})) \nonumber. 
\end{align}}}
\cblue{Finally,}
\footnotesize{
\begin{align}
\mathbb{E}& \{(\X-\mub)^\top \Sigmab^{-1} (\X-\mub)\}  
= \text{tr} (\Sigmab^{-1} \widetilde{\P} ) + \text{tr}(\Sigmab^{-1} (\mub-\widetilde{\x})  (\mub-\widetilde{\x})^\top) \nonumber \\
& = \text{tr} (\Sigmab^{-1} (\widetilde{\P} + (\mub-\widetilde{\x})  (\mub-\widetilde{\x})^\top)) \label{eqappA}.
\end{align}
}
\normalsize

\section{E-step calculations}
\label{appendix_i2}
Here, we explicit the end of the calculations needed for $q(\Ab;\Ab^{(i)})$. We recall that  $\widetilde{\Ab} = [\Id_{N_x} , - \Ab]$. Let $k \in \{1,\ldots,K\}$. Then,
\begin{equation}
\widetilde{\Ab}^\top \cred{\Qb}^{-1} \widetilde{\Ab} = \left[\begin{array}{cc} \cred{\Qb}^{-1} & - \cred{\Qb}^{-1} \Ab\\  \Ab^\top \cred{\Qb}^{-1} & \Ab^\top \cred{\Qb}^{-1} \Ab \end{array}\right].
\end{equation}
Morever,
\begin{multline}
\meanSmooth_{k:k-1} (\meanSmooth_{k:k-1})^\top = \left[\begin{array}{cc} \meanSmooth_{k} (\meanSmooth_{k})^\top  & \meanSmooth_{k-1} (\meanSmooth_{k})^\top \\  \meanSmooth_{k} (\meanSmooth_{k-1})^\top & \meanSmooth_{k-1} (\meanSmooth_{k-1})^\top \end{array}\right].
\end{multline}
Thus,
\begin{align}
& \text{tr}\left(\widetilde{\Ab}^\top \cred{\Qb}^{-1} \widetilde{\Ab} (\covSmooth_{k:k-1} + \meanSmooth_{k:k-1} (\meanSmooth_{k:k-1})^\top)\right)\\
& = \text{tr}\left(
\left[\begin{array}{cc} \cred{\Qb}^{-1} & - \cred{\Qb}^{-1} \Ab\\  \Ab^\top \cred{\Qb}^{-1} & \Ab^\top \cred{\Qb}^{-1} \Ab \end{array}\right]\right.\nonumber\\
& \left.
\quad \times
\left(\left[\begin{array}{cc} \covSmooth_{k} & \covSmooth_{k} \Gb_{k-1}^\top \\  \Gb_{k-1} \covSmooth_{k} & \covSmooth_{k-1} \end{array}\right] \right.\right. \nonumber\\
& \qquad \left.\left.
+ \left[\begin{array}{cc} \meanSmooth_{k} (\meanSmooth_{k})^\top  & \meanSmooth_{k-1} (\meanSmooth_{k})^\top \\  \meanSmooth_{k} (\meanSmooth_{k-1})^\top & \meanSmooth_{k-1} (\meanSmooth_{k-1})^\top \end{array}\right]\right)
\right). \label{eq:traceI2}
\end{align}
In order to limit computations, we can make use of the fact that
\begin{equation}
\text{tr}\left( \left[\begin{array}{cc} A  & B \\  C & D \end{array}\right]\right) = \text{tr}(A) + \text{tr}(D). \label{eq:traceprop}
\end{equation}
{
Using  \eqref{eq:traceprop} and the additivity of the trace in \eqref{eq:traceI2} leads to equality \eqref{eq:appB}(a). Then, using the permutation property of the trace yields the equality \eqref{eq:appB}(b):
\normalsize{
\begin{align}
& \text{tr}\left(\widetilde{\Ab}^\top {\Qb}^{-1} \widetilde{\Ab} (\covSmooth_{k:k-1} + \meanSmooth_{k:k-1} (\meanSmooth_{k:k-1})^\top)\right) \nonumber\\
& \overset{(a)}{=} \text{tr}\left( 
{\Qb}^{-1} (\covSmooth_k + \meanSmooth_{k} (\meanSmooth_{k})^\top)\right) \nonumber\\
& \quad +\text{tr}\left(- {\Qb}^{-1} \Ab   (\Gb_{k-1} {\covSmooth_k} + \meanSmooth_{k-1} (\meanSmooth_{k})^\top )  \right) 
\nonumber \\
& \quad
+ \text{tr}\left(- \Ab^\top  {\Qb}^{-1} ( \covSmooth_k \Gb_{k-1}^\top  + \meanSmooth_{k} (\meanSmooth_{k-1})^\top ) \right) \nonumber \\
& \quad +  \text{tr}\left(\Ab^\top {\Qb}^{-1} \Ab ( \covSmooth_{k-1} + \meanSmooth_{k-1} (\meanSmooth_{k-1})^\top)
\right) \notag \\
& \overset{(b)}{=} \text{tr}\left(  
 {\Qb}^{-1} (\covSmooth_k + \meanSmooth_{k} (\meanSmooth_{k})^\top)\right) \nonumber \\
& \quad +  \text{tr}\left(- {\Qb}^{-1} (\covSmooth_k \Gb_{k-1}^\top  + \meanSmooth_{k} (\meanSmooth_{k-1})^\top ) \Ab^\top  \right) \nonumber\\
& \quad + \text{tr}\left(- {\Qb}^{-1} \Ab   (\Gb_{k-1}  {\covSmooth_k} + \meanSmooth_{k-1} (\meanSmooth_{k})^\top )  \right)  
 \nonumber\\
& \quad + \text{tr}\left({\Qb}^{-1} \Ab ( \covSmooth_{k-1} + \meanSmooth_{k-1} (\meanSmooth_{k-1})^\top) \Ab^\top \right). \label{eq:appB}
\end{align}
}
}
{\normalsize Finally,
 \footnotesize{
\begin{align}
& \text{tr}\left(\widetilde{\Ab}^\top {\Qb}^{-1} \widetilde{\Ab} (\covSmooth_{k:k-1} + \meanSmooth_{k:k-1} (\meanSmooth_{k:k-1})^\top)\right) \nonumber\\
& = \text{tr}\left( 
 {\Qb}^{-1} (\covSmooth_k + \meanSmooth_{k} (\meanSmooth_{k})^\top  -   (\covSmooth_k \Gb_{k-1}^\top  + \meanSmooth_{k} (\meanSmooth_{k-1})^\top ) \Ab^\top \right. \nonumber \\
& \quad
 \left.- \Ab   (\Gb_{k-1} \covSmooth_k  {+}  \meanSmooth_{k-1} (\meanSmooth_{k})^\top ) + \Ab ( \covSmooth_{k-1} + \meanSmooth_{k-1} (\meanSmooth_{k-1})^\top) \Ab^\top ) \right). \label{eqappb}
\end{align}
}
}
\normalsize

\bibliographystyle{IEEEtr}

\end{document}